\newcommand{\Description}[2][]{}
\newcommand{\argmax}{\mathop{\rm arg~max}\limits}
\newtheorem{theorem}{Theorem}
\newtheorem{proposition}{Proposition}
\newcommand{\boldtypefaceseries}{\mdseries}
\title{On the Power of Perturbation under Sampling in Solving Extensive-Form Games}
\author{
Wataru Masaka$^1$,
Mitsuki Sakamoto$^2$,
Kenshi Abe$^{2,1}$,
Kaito Ariu$^2$,\\
Tuomas Sandholm$^{3}$,
Atsushi Iwasaki$^1$
}
\begin{document}

\maketitle

\begin{abstract}
\boldtypefaceseries
We investigate how perturbation does and does not improve the Follow-the-Regularized-Leader (FTRL) algorithm in solving imperfect-information extensive-form games under sampling, where payoffs are estimated from sampled trajectories. 
While optimistic algorithms are effective under full feedback, they often become unstable in the presence of sampling noise. 
Payoff perturbation offers a promising alternative for stabilizing learning and achieving \textit{last-iterate convergence}. 
We present a unified framework for \textit{Perturbed FTRL} algorithms and study two variants: PFTRL-KL (standard KL divergence) and PFTRL-RKL (Reverse KL divergence), the latter featuring an estimator with both unbiasedness and conditional zero variance.
While PFTRL-KL generally achieves equivalent or better performance across benchmark games, PFTRL-RKL consistently outperforms it in Leduc poker, whose structure is more asymmetric than the other games in a sense. 
Given the 
modest advantage of PFTRL-RKL, we design 
the second experiment to isolate the effect of conditional zero variance, 
showing that the variance-reduction property of RKL improve last-iterate performance.
\mdseries

\end{abstract}


\section{Introduction}
\label{sec:introduction}

Extensive-form games (EFGs) model sequential interactions among agents. Players make a sequence of decisions under imperfect information where they may not directly observe the actions by chance or the other player(s). 
Over the past decade, finding an (approximate) equilibrium in extensive-form games by using no-regret learning algorithms has been extensively studied~\citep{moravcik:science:2017,brown2018superhuman,brown2019solving}.
When each player minimizes regret, the time-averaged strategies approximate Nash equilibria in two-player zero-sum games, that is, the algorithms guarantee \textit{average-iterate} convergence. 
However, the actual sequence of strategies does not necessarily converge: it can cycle or even diverge~\citep{mertikopoulos2018cycles,bailey2018multiplicative}.
This is problematic as averaging can demand significant memory and computational resources for large games, e.g., if neural networks are used for state generalization, so each snapshot to be included in the averaging requires storing a separate neural network.

This is one of the main motivations for the study of {\it last-iterate} convergence, a stronger requirement  
where the strategies themselves converge to an equilibrium. 
%
However, the celebrated optimistic approach, e.g., \cite{daskalakis2018last}, faces difficulties when the feedback is contaminated by noise. Typically, each agent updates their strategy based on perfect gradient feedback from the payoff function at each iteration.
Noise arises and distorts this feedback when players cannot precisely observe the actions of others or when payoffs
must be estimated via merely \textit{sampling} the game-tree, as is the case in large games.
%
%
In such \textit{noisy feedback} settings, optimistic learning algorithms perform poorly, e.g.,~\citep{abe2023last}.
Optimistic algorithms, e.g., \citep{lee2021last}, 
which are designed for 
full-feedback settings requiring a complete traversal of the game-tree in each iteration, would similarly fail when payoff gradient estimation is needed, as in \textit{Monte Carlo Counterfactual Regret Minimization} (MCCFR)~\cite{lanctot2013monte}. 

\boldtypefaceseries
To this end, we examine how \emph{payoff perturbation}, a technique recently revived as a means to achieve \emph{last-iterate convergence}~\cite{perolat2021poincare,liu2022power,abe2023last,abe:icml:2024}, impacts FTRL dynamics in extensive-form games when feedback is derived via sampling. Rather than proposing a new state-of-the-art algorithm, our goal is to deepen theoretical and empirical understanding of how different forms of perturbation behave under outcome sampling---where payoffs are estimated from a single trajectory. This inquiry builds on classical insights from variational inequality theory~\cite{facchinei2003finite}, where strongly convex penalties stabilize learning, and connects to recent advances like \textit{Reward-Transformed FTRL}~\cite{perolat2021poincare}, the basis of \textit{DeepNash}~\cite{perolat:science:2022}.


We propose a unified framework for \emph{perturbed FTRL algorithms under outcome sampling}, which generalizes prior work and supports variance-sensitive analysis. Within this framework, we introduce \emph{PFTRL-RKL}, an extension of Mutant FTRL~\cite{abe2022mutationdriven, abe2023last} (originally developed for normal-form games) that modulates perturbation strength via \emph{Reverse KL divergence} from a fixed anchoring strategy. When the anchor keeps uniform, PFTRL-RKL coincides with \emph{replicator-mutator dynamics} from evolutionary game theory~\cite{Hofbauer1998}. We also revisit a KL-based variant, \emph{PFTRL-KL}, and show that it recovers Reward-Transformed FTRL. These formulations allow us to compare variance behavior under sampling and investigate their practical implications for convergence.


Although sampling reduces the computational cost of FTRL in one iteration, it often introduces high-variance estimates of payoffs~\cite{schmid2019variance,davis2020low}, which can destabilize learning. To address this, we analyze the variance of the two perturbation schemes. We show that the Reverse KL-based estimator used in PFTRL-RKL is both \textit{unbiased} and admits \textit{conditional zero variance} in the perturbation term---i.e., the variance of the cumulative perturbation magnitude is zero when conditioned on a sampled trajectory, even though the non-perturbed Q-value estimates exhibit non-zero variance. This makes PFTRL-RKL particularly appealing in high-variance sampling regimes.

In our experiments, both perturbed variants consistently stabilize last-iterate behavior and achieve significantly lower exploitability than standard FTRL. While PFTRL-KL generally matches or exceeds PFTRL-RKL on most benchmark games, we observe that PFTRL-RKL outperforms KL in Leduc poker, whose structure is more asymmetric than the other games in a sense.  
Given the relatively modest advantage of PFTRL-RKL, we design the second experiment to isolate the effect of conditional zero variance and confirms that variance reduction via Reverse KL can yield tangible improvements in last-iterate, especially when perturbation strength is larger.  
\mdseries

\subsection{Related Literature}
\label{sec:related lieterature}

Traditionally, finding equilibria in extensive-form games has involved transforming the problem into a linear program, but these methods do not scale to large-scale games~\citep{gilpin:jacm:2007}.
CFR~\citep{zinkevich2007regret} and its modern variants~\citep{tammelin2014solving,brown2019solving,farina:icml:2019,Farina21:Faster,Zhang24:Faster} have become foundational for solving extensive-form games. 
However, a full game-tree traversal is prohibitive in large games. 
MCCFR~\citep{lanctot2009nips,lanctot2013monte} addresses this by sampling parts of the game tree and estimating values (utility gradients or regrets) from sampled histories instead of traversing the entire tree. Two key unbiased estimators, external sampling and outcome sampling, have been proposed within this framework. This paper focuses on outcome sampling, which is computationally less expensive than external sampling but suffers from high variance due to its reliance on importance sampling to maintain unbiasedness.

Variance reduction has been widely studied in the reinforcement learning community to improve the performance of such estimators. 
For example, policy gradient algorithms commonly use baseline functions that depend on states to reduce variance. 
Building on this idea, \citet{schmid2019variance}
introduced variance-reduction techniques for MCCFR, and 
\citet{davis2020low}
extended baseline functions to depend on both states and actions. Alternatively, ESCHER~\cite{mcaleer:iclr:2023} introduced an unbiased estimator that avoids importance sampling, thereby reducing variance along sampled trajectories. Moreover, 
\citet{farina2021bandit} reformulated extensive-form games as bandit linear optimization, and subsequent work by \citet{kozuno2021learning} proposed low-variance estimators using implicit exploration, further refined with a balanced strategy~\cite{bai2022nearoptimal,fiegel2023adapting}. Notably, 
\citet{fiegel2023adapting} 
highlighted last-iterate convergence under sampling as a key challenge for future research, which the present paper seeks to address.

Despite the progress made by CFR-based methods, these approaches face challenges in requiring the computation of time-averaged strategies, which significantly increases computational complexity and memory usage~\citep{michael2015headsup}. 
To address this, recent research has focused on algorithms that achieve last-iterate convergence, where strategies themselves converge to equilibrium without relying on averaging. 
Optimistic algorithms  
have shown promise in achieving last-iterate convergence, in both normal-form~\citep{daskalakis2017training,daskalakis2018last,mertikopoulos2018optimistic,wei2020linear} and extensive-form~\citep{Farina21:Faster,lee2021last,liu2022power} games, albeit typically under full game-tree traversals. 
However, under sampling, optimistic algorithms have been shown to experimentally fail to converge to equilibrium even in normal-form games~\citep{abe2022mutationdriven,abe2023last}. 
Unlike the aforementioned prior studies, the present paper primarily investigates whether perturbed FTRLs can exhibit last-iterate convergence under sampling. Our goal is not to develop state-of-the-art algorithms, but to provide insights into the behavior of perturbed FTRLs in this challenging setting.

\section{Preliminaries}
\label{sec:preliminaries}


An extensive-form zero-sum game with imperfect information is defined as a tuple $\langle N, c, H, Z, A, \tau, \pi_c, u, \mathcal{X} \rangle$.
There are a finite set $N$ of players and a chance player $c$. 
$H=\bigcup_{N\cup\{c\}}H_i$ is the set of all possible {\it histories}, where each history is a sequence of {\it actions} and $H_i$ is the set of histories of player $i$'s action. 
We define $h \sqsubseteq h^{\prime}$ to mean $h\in H$ is a prefix sequence or equal to $h^{\prime}\in H$.
Assume {\it terminal histories} $Z_i \subset H_i$ for all $i$, we have $Z=\bigcup_{N\cup\{c\}}Z_i$ as the set of all terminal histories where the game has ended and the player has no available actions. 
At each history $h\in H\setminus Z$, the current player chooses an action $a\in A(h)$. We denote $A(h)$ as the set of actions available at history $h$ that lead to a successor history $(ha)\in H$. 
A \textit{player function} $\tau: H\setminus Z \to N \cup \{c\}$ maps each history $h$ to the player that chooses the next action at $h$. The chance player $c$ acts according to the defined distribution $\pi_c(\cdot | h)\in \Delta(A(h))$. 
A {\it payoff function} $u_i(h,a)$ maps each history $h\in H$ and action $a\in A(h)$ to a real value for player $i$. We assume that $u_i(h,a)=0$ holds if $ha\neq z\in Z$. Only the terminal histories produce non-zero payoffs. Also, if the game is two-player zero-sum, $u_1(h,a)=-u_2(h,a)$ holds. 


For each player $i\in N$, the collection of \textit{information sets} $X_i\in \mathcal{X}$ are \textit{information partitions} of the histories $\{h\in H | \tau(h)=i\}$. Player $i$ does not observe the true history $h$, but only the information set $x\in X_i$ corresponding to $h$. This implies that for each information set $x$, if any two histories $h, h^{\prime}$ belong to $x$, these histories are indistinguishable to the player $i$: $A(h)=A(h')$ for any $h, h'\in x$, which we then denote $A(x)$. 
We also denote $x(h)\in \mathcal{X}$ as an information set being reached by history $h$.


A \textit{strategy} or \textit{policy} $\pi_i(\cdot | x)$ maps an information set to a distribution over $\Delta(A(x))$. 
Each player $i$ chooses actions according to the strategy at each information set $x\in X_i$. When we restrict $\pi$ over $X_i$, we write a \textit{strategy profile} $\pi=(\pi_i,\pi_{-i})$


{\bf Reach Probabilities.} The \textit{reach probability} $\rho^{\pi}(h)$ of history $h$ under a strategy profile $\pi$ is 
\begin{align*}
\rho^{\pi}(h)=\prod_{(h^{\prime} a') \sqsubseteq h}\pi_{\tau(h^{\prime})}(a'|x(h^{\prime})).
\end{align*}
The reach probability $\rho^{\pi}(h)$ can be decomposed into $\rho_i^{\pi}(h)\rho_{-i}^{\pi}(h)$ for all $h\in H$ as the product of the reach probability of player~$i$ of history~$h$ and that of player~$-i$ (and the chance player $c$) of history~$h$. 
The probability of transitioning to $h^{\prime}$ from $h$ is given by:
\begin{align*}
\rho^{\pi}(h, h^{\prime}) =
\begin{cases}
\frac{\rho^{\pi}(h^{\prime})}{\rho^{\pi}(h)} & \mathrm{if} ~ \rho^\pi(h)>0 ~ \mathrm{and} 
~ h\sqsubseteq h^{\prime}, \\
0 & \mbox{otherwise.}
\end{cases}.
\end{align*}
The transition probability can also be decomposed into $\rho_i^{\pi}(h, h^{\prime})$ and $\rho_{-i}^{\pi}(h, h^{\prime})$. If history $h$ is the prefix sequence of $h'$, the transition probabilities are 
\begin{align*}
    \rho_i^{\pi}(h, h^{\prime})=\frac{\rho_i^{\pi}(h^{\prime})}{\rho_i^{\pi}(h)}\ \text{and}\ \rho_{-i}^{\pi}(h, h^{\prime})=\frac{\rho_{-i}^{\pi}(h^{\prime})}{\rho_{-i}^{\pi}(h)}
\end{align*} and these probabilities become zero, otherwise. 

Under \textit{perfect recall}~\cite{zinkevich2007regret}, in which the players do not forget any information that they once observed, let us define the reach probability of information set $x\in \mathcal{X}$, or summing over histories in $x$, under strategy profile $\pi$ as 
\begin{align*}
\rho^{\pi}(x)=\sum_{h\in x}\rho^{\pi}(h) = \rho_{i}^{\pi}(h)\left(\sum_{h'\in x}\rho_{-i}^{\pi}(h')\right), 
\end{align*}
for all $h\in x$. 
Also, we can write $\rho_{i}^{\pi}(h)$ as $\rho_{i}^{\pi}(x)$ for any $h\in x$, and $\sum_{h'\in x}\rho_{-i}^{\pi}(h')$ as $\rho_{-i}^{\pi}(x)$.

\textbf{Value Functions.} Given a strategy profile $\pi$, let us define the expected cumulative payoff for each player $i$ as 
\begin{align*}
u_i(\pi)=\sum_{h \in H \backslash Z } \sum_{a \in A(h)} \rho^{\pi}( ha)u_i(h,a).
\end{align*}
Next, let us define the Q-value of a policy $\pi$ for player $i$ at history $h$ while taking action $a$ by 
\begin{align}\label{eq:Q-value}
    q^{\pi}_i(h, a) = \sum_{h'a' \sqsupseteq ha} \rho^{\pi}(ha, h'a') u_i(h',a').
\end{align}
Finally, let us define {\it counterfactual value}, or value of a policy $\pi$ for player $i$ at information set $x$ while taking action $a$, as 
\begin{align}\label{eq:counterfactual value}
v_i^{\pi}(x, a)=\sum_{h \in x} \rho^{\pi}_{-i}(h) q_i^{\pi}(h,a).
\end{align}

\textbf{Nash Equilibrium.}
A popular solution concept for extensive-form games is a \textit{Nash equilibrium}~\citep{nash1951non}, where no player can increase their expected utility by deviating from their designated strategy. In two-player zero-sum extensive-form games, a Nash equilibrium $\pi^{\ast}=(\pi_1^{\ast}, \pi_2^{\ast})$ ensures the following condition: $\forall \pi_1\in \Sigma_1, \forall \pi_2\in \Sigma_2,$
\begin{align*}
    u_1(\pi_1^{\ast}, \pi_2) \geq u_1(\pi_1^{\ast}, \pi_2^{\ast}) \geq u_1(\pi_1, \pi_2^{\ast}),
\end{align*}
where $\Sigma_i$ symbolizes the set of all strategies for player $i$.
Furthermore, we define 
\begin{align*}
\mathrm{exploit}(\pi):=\max_{\widetilde{\pi}_1\in \Sigma_1}u_1(\widetilde{\pi}_1, \pi_2) + \max_{\widetilde{\pi}_2\in \Sigma_2}u_2(\pi_1, \widetilde{\pi}_2),
\end{align*}
as \textit{exploitability} of a given strategy profile $\pi$. Exploitability is a metric for measuring the closeness of $\pi$ to a Nash equilibrium $\pi^{\ast}$ in two-player zero-sum games \citep{johanson2011accelerating,lockhart2019computing,timbers2020approximate,abe2020off}.
By definition, exploitability satisfies $\mathrm{exploit}(\pi)\geq 0$ for any $\pi$, and it is $0$ if and only if $\pi$ is a Nash equilibrium.
At iteration $t$, for all information set $x$, a last-iterate strategy is denoted as $\pi^t(x)$
and an average-iterate strategy $\bar{\pi}^t(x)$ is denoted as $\frac{\sum_{s=1}^t \rho_i^{\pi^s}(x)\pi^s(x)}{\sum_{s=1}^t\rho_i^{\pi^s}(x)}$.


\textbf{Follow the Regularized Leader and Sampling.} We consider the online learning setting with a finite and discrete number of iterations. At each iteration $t\geq 1$, each player $i\in N$ determines her strategy $\pi_i^t$ based on the previously observed counterfactual values.
By traversing the entire game-tree or sampling a part of terminal histories $Y\subseteq Z$, each player $i$ calculates or estimates the counterfactual values $v_i^{\pi^t}(x,a)$ for each $x\in X_i$ and $a\in A(x)$.

This paper focuses on a widely used algorithm, FTRL. In imperfect information games, it defines a sequence of strategies $(\pi^t)_{t\in\{1,2,\ldots\}}$ for all $i\in N$ and $x\in \mathcal{X}$ as follows: 
\begin{align}\label{eq:ftrl}
    \pi^{t+1}_{i}(\cdot|x) \!=\! \argmax_{\pi\in \Delta(A(x))} \!\left\{  \eta\left\langle \sum_{s=1}^t v_i^{\pi^{s}}(x,\cdot), \pi\right\rangle \! - \! \psi_i(\pi)\right\},
\end{align}
where $\eta$ is a learning rate and $\psi_i: \Delta(A(x))\to \mathbb{R}$ is a strongly convex regularization function. 

To calculate the counterfactual values $v_i^{\pi^{s}}$ exactly, a prohibitive computational cost is required. 
Thus, to reduce the cost, some sampling schemes are used to estimate the counterfactual values by sampling a portion of terminal histories at each iteration.
%
%
We consider the following sampling scheme: 1) Let $\mathcal{Y}=\{Y_1, \cdots, Y_k\}$ be a set of subsets of the terminal histories $Z$, where the union of $\mathcal{Y}$ is $Z$; 2) At each iteration $t$, we sample one of the subsets $Y_j$ according to a \textit{sampling strategy}, or a predefined probability distribution $p\in \Delta(\mathcal{Y})$; 3) We estimate the counterfactual values from the sampled terminal histories $Y_j$ (formally defined later).
This sampling approach has been developed in the context of CFR
and initiated as 
MCCFR~\citep{lanctot2009nips}, which is an unbiased approximation of CFR and retains its desirable properties. 

\citet{lanctot2009nips} 
define two sampling schemes: \textit{external} and \textit{outcome} sampling. 
External sampling samples only the opponent’s (and chance’s) choices, requiring a forward model of the game to recursively traverse all subtrees under the player’s actions. In contrast, outcome sampling is the most extreme sampling variant, where blocks consist of a single terminal history. It is the only model-free variant of MCCFR that aligns with the standard reinforcement learning loop, allowing the agent to learn solely from its experience with the environment.
In the context of FTRL, outcome sampling is used to estimate the counterfactual values while representing strategies via neural network~\cite{perolat2021poincare,sokota2022unified}, although we focus on a tabular representation to make the effect of perturbation clear. 

This paper focuses on outcome sampling because it provides less information (less feedback) than external sampling. We expect external sampling to perform better than (or as well as) outcome sampling, which is usually used even in the context of FTRL~\cite{perolat2021poincare,sokota2022unified}. Please consult Supplementary Material~\ref{sec:external sampling} in detail. 
In addition, usually a current or last-iterate strategy is used to sample trajectories for estimating counterfactual values. 
However, 
\citet{mcaleer:iclr:2023} 
recently showed that, for CFR, sampling via a fixed strategy, which remains unchanged across iterations, outperforms sampling via a last-iterate strategy with some randomization (e.g., \(\varepsilon\)-greedy). Therefore, unless otherwise noted, this paper adopts a fixed sampling strategy, where actions are chosen at random in each iteration to estimate counterfactual values.
%

\section{Outcome Sampling Perturbed FTRLs}
\label{sec:os-pftrl}

\subsection{Perturbed FTRLs for Extensive-Form Games}
\label{sec:extenstion}

This section first presents Perturbed FTRL with \textit{Reverse Kullback-Leibler divergence} (PFTRL-RKL), which
%
perturbs each player's payoff function via the divergence $d_i^{\pi,\sigma}(h,a)$
when taking action $a$ at history $h$. 
It identifies the magnitude of the perturbation as the gradient of the KL divergence\footnote{$\mathrm{KL}(\pi_i, \pi_i')$ is defined as $\sum_{j}\pi_{ij}\ln \pi_{ij}/\pi_{ij}'$ for any two strategies $\pi_i$ and $\pi'_i$.} between the anchoring and current strategies $\nabla_{\pi_i}\mathrm{KL}(\sigma_i,\pi_i)$: 
\begin{align}\label{eq:reverse KL divergence}
    d_i^{\pi,\sigma}(h,a)=\frac{\mathds{1}_{i=\tau(h)}}{\pi_i(a|x(h))}(\sigma_i(a|x(h))-\pi_i(a|x(h)))
\end{align}
where $\sigma_i(\cdot\mid x)$ is a probability simplex over $A(x)$, $\mathds{1}$ is an indicator function. 
Note that the KL divergence 
typically takes the current strategy as the first argument and the anchoring strategy as the second, resulting in $d_i^{\pi,\sigma}(h,a)=\nabla_{\pi_i} \mathrm{KL}(\pi_i, \sigma_i)$. In contrast, we reverse the arguments and derive $d_i^{\pi,\sigma}(h,a)$ from $\nabla_{\pi_i} \mathrm{KL}(\sigma_i, \pi_i)$. Therefore, we refer to the KL divergence used herein as the Reverse KL divergence.

The cumulative magnitude of perturbation reaching history $ha$ under the current strategy $\pi$ is given by
\begin{align}\label{eq:cumulative magunitude of perturbation}
\delta_i^{\pi,\sigma}(h,a)=\sum_{h'a'\sqsupseteq ha}\rho^\pi(ha,h'a')d_i^{\pi,\sigma}(h',a').    
\end{align}
We then obtain the perturbed Q-value for each history-action pair by summing the non-perturbed Q-value in Eq.~\ref{eq:Q-value} and Eq.~\ref{eq:cumulative magunitude of perturbation} with the perturbation strength $\mu$:  
\begin{align*}
q_i^{\pi,\sigma}(h,a)=q_i^\pi(h,a)+\mu\delta_i^{\pi,\sigma}(h,a).
\end{align*}
Following the construction of Eq.~\ref{eq:counterfactual value}, 
we next construct the perturbed counterfactual value while choosing action $a$ at information state $x$ under the current strategy $\pi$ and the anchoring strategy $\sigma$: 
\begin{align}\label{eq:perturbed counterfactual value}
v_i^{\pi,\sigma}(x,a)=\sum_{h\in x}\rho_{-i}^\pi(h) q_i^{\pi,\sigma}(h,a).
\end{align}

Building on those components, we have PFTRL-RKL, which updates the strategy $\pi^t(\cdot\mid x)$ for each information set $x$ as follows.
{\small\begin{align}\label{eq:perturbed ftrl}
    \pi^{t+1}_{i}(\cdot|x) \!=\! \argmax_{\pi\in \Delta(A(x))} \!\left\{  \eta\left\langle \sum_{s=1}^t v_i^{\pi^{s}, \sigma}(x,\cdot), \pi\right\rangle \! - \! \psi_i(\pi)\right\}
\end{align}}
where $\eta$ is a learning rate and $\psi_i: \Delta(A(x))\to \mathbb{R}$ is a strongly convex regularization function. We here just replace $v_i^{\pi^{s}}(x,\cdot)$ in Eq.~\ref{eq:ftrl} with the perturbed counterfactual value $v_i^{\pi^{s}, \sigma}(x,\cdot)$ in Eq.~\ref{eq:perturbed ftrl}.

\boldtypefaceseries
Next we demonstrate that the update rule of PFTRL-RKL is interpreted as an extension of Mutant FTRL—originally developed for normal-form games—to the setting of extensive-form games under outcome sampling.
\mdseries
\begin{proposition}
    The update rule of PFTRL-RKL in Eq.~\ref{eq:perturbed ftrl}, whose magnitudes of perturbation are specified by Eq.~\ref{eq:reverse KL divergence}, is equivalent to Mutant FTRL~\cite{abe2022mutationdriven,abe2023last}.
%
\end{proposition}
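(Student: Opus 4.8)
The plan is to show that when PFTRL-RKL is specialized to a normal-form game, the perturbed counterfactual value $v_i^{\pi^s,\sigma}(x,\cdot)$ in Eq.~\ref{eq:perturbed counterfactual value} reduces exactly to the ``mutated'' payoff vector that drives Mutant FTRL. First I would recall that a normal-form game is the special case of an EFG in which each player has a single information set $x$ (one decision point) and every action leads directly to a terminal history, so that the reach-probability machinery collapses: $\rho^\pi_{-i}(h)$ is just the opponents' strategy mass on the relevant history, the transition probabilities $\rho^\pi(ha,h'a')$ in Eq.~\ref{eq:cumulative magunitude of perturbation} are all $0$ except for the trivial $ha=h'a'$ term, and hence $\delta_i^{\pi,\sigma}(h,a)=d_i^{\pi,\sigma}(h,a)$. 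Then the perturbed $Q$-value is simply $q_i^\pi(h,a)+\mu\, d_i^{\pi,\sigma}(h,a)$, and substituting Eq.~\ref{eq:reverse KL divergence} gives a perturbation term proportional to $\frac{1}{\pi_i(a|x)}(\sigma_i(a|x)-\pi_i(a|x))$.

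The key computational step is then to aggregate this into the counterfactual value and compare term by term with the Mutant FTRL update of \cite{abe2022mutationdriven,abe2023last}. I would write out the Mutant FTRL update rule explicitly — its per-action payoff is the usual expected payoff plus a mutation term of the form $\mu\bigl(\sigma_i(a|x)/\pi_i(a|x)-1\bigr)$ (equivalently the gradient $\nabla_{\pi_i}\mathrm{KL}(\sigma_i,\pi_i)$ evaluated at the current iterate) — and observe that this is precisely $\mu\, d_i^{\pi,\sigma}(h,a)$ with the indicator $\mathds{1}_{i=\tau(h)}$ trivially equal to $1$ since the only history is player $i$'s own decision node. Summing against $\rho^\pi_{-i}(h)$ over $h\in x$ recovers exactly the expected-payoff part $v_i^\pi(x,a)$ plus the unchanged mutation term (the mutation term does not depend on $h$ within $x$, and $\sum_{h\in x}\rho^\pi_{-i}(h)=\rho^\pi_{-i}(x)=1$ in the normal-form reduction since there is a single history per action profile of the opponents, or more carefully one normalizes so the weights sum to one). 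Plugging the resulting $v_i^{\pi^s,\sigma}(x,\cdot)$ into Eq.~\ref{eq:perturbed ftrl} and matching the learning rate and regularizer $\psi_i$ yields term-by-term the Mutant FTRL iteration.

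The main obstacle — really the only non-routine point — is bookkeeping the normalization: in a genuine EFG with chance and nontrivial information partitions, $\rho^\pi_{-i}(x)=\sum_{h\in x}\rho^\pi_{-i}(h)$ need not equal $1$, and the mutation term in Mutant FTRL is stated without such a weight, so I must be careful to invoke the normal-form reduction (single information set, opponents' strategy supplies the weights, which sum to one) at exactly the right place, and to confirm that the indicator $\mathds{1}_{i=\tau(h)}$ and the prefix-sum in Eq.~\ref{eq:cumulative magunitude of perturbation} genuinely collapse. I would also need to note that the anchoring strategy $\sigma$ here plays the role of the mutation target in Mutant FTRL, and — to close the loop with the remark after the proposition — observe that taking $\sigma_i$ uniform makes the mutation term $\mu(1/(|A(x)|\pi_i(a|x))-1)$, which is the replicator–mutator form; but that last observation is a corollary, not needed for the equivalence itself. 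Everything else is substitution of definitions.
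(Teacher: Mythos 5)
The paper gives no proof of this proposition --- it is explicitly omitted as ``straightforward'' --- so there is nothing to compare line by line; your reduction-to-normal-form argument is the natural way to discharge it, and it goes through. Two remarks. First, your stated mechanism for the collapse $\delta_i^{\pi,\sigma}(h,a)=d_i^{\pi,\sigma}(h,a)$ is not quite right: in the sequential EFG encoding of a normal-form game, player $1$'s action leads to player $2$'s decision node, so the transition probabilities $\rho^{\pi}(ha,h'a')$ to deeper history--action pairs are \emph{not} zero; the deeper terms vanish because of the indicator $\mathds{1}_{i=\tau(h')}$ (and, in the fully general EFG case, because $\sum_{a'}\bigl(\sigma_{\tau(h')}(a'|x(h'))-\pi_{\tau(h')}(a'|x(h'))\bigr)=0$ at every interior node --- this telescoping is exactly the first computation in the paper's proof of Theorem~\ref{thm:zero_variance}, which shows $\delta_i^{\pi,\sigma}(h,a)=\frac{1}{\pi_i(a|x(h))}(\sigma_i(a|x(h))-\pi_i(a|x(h)))$ without any normal-form assumption). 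You flag this as a point needing care, so it is a repairable slip rather than a gap, but the repair is the telescoping identity, not vanishing transition probabilities. Second, your handling of the normalization $\sum_{h\in x}\rho^{\pi}_{-i}(h)=1$ for the second mover's information set is the genuinely necessary bookkeeping step and you place it correctly: the mutation term is constant across $h\in x$, so it passes through the $\rho^{\pi}_{-i}$-weighted sum unchanged, recovering the Mutant FTRL payoff $q_i^{\pi}(a)+\mu\bigl(\sigma_i(a)/\pi_i(a)-1\bigr)$ exactly.
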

\boldtypefaceseries
Mutant FTRL is inspired by the replicator-mutator dynamics in evolutionary game theory~\cite{Hofbauer1998,zagorsky:plosone:2013,Bauer:2019}, where strategy updates reflect both selection and post reproduction mutation. The strategy trajectory converges to a stationary point of the dynamics. This connection is one of the key motivations for adopting RKL in our formulation—a distinct contrast to conventional KL. 
%
\mdseries

Next we are going to identify PFTRL with the KL divergence (PFTRL-KL) by constructing $d_i^{\pi,\sigma}(h,a)$ with $\nabla_{\pi_i}KL(\pi_i,\sigma_i)$ which takes $\pi_i$ as the first argument and $\sigma_i$ as the second, resulting in  
\begin{align}\label{eq:KL divergence}
    d_i^{\pi, \sigma}(h, a) = \mathds{1}_{i=\tau(h)}\ln \frac{\sigma_i(a|x(h))}{\pi_i(a|x(h))}.
\end{align}

\begin{proposition}
    The update rule of PFTRL-KL in Eq.~\ref{eq:perturbed ftrl}, whose magnitudes of perturbation are specified by Eq.~\ref{eq:KL divergence}, is equivalent to Reward Transformed FTRL~\cite{perolat2021poincare}. 
\end{proposition}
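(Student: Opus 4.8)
\medskip
\noindent\textit{Proof plan.}
The plan is to show that the perturbed counterfactual values $v_i^{\pi,\sigma}(x,\cdot)$ that PFTRL-KL feeds into the update in Eq.~\ref{eq:perturbed ftrl} coincide with the counterfactual values of the \emph{reward-transformed} game of \citet{perolat2021poincare}; once the inputs to the FTRL step agree, the two update rules agree by construction, since both apply the same $\argmax$ over $\Delta(A(x))$ with the same regularizer $\psi_i$ and learning rate $\eta$. This parallels the argument for the previous proposition, with the Reverse-KL magnitude replaced by the KL magnitude of Eq.~\ref{eq:KL divergence}.

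First I would recall the transformation used in Reward Transformed FTRL: at each node where player $i$ acts, the instantaneous payoff of $i$ is shifted so that the transformed payoff at $(h,a)$ is $u_i(h,a)+\mathds{1}_{i=\tau(h)}\,\mu\ln\!\bigl(\sigma_i(a\mid x(h))/\pi_i(a\mid x(h))\bigr)=u_i(h,a)+\mu\,d_i^{\pi,\sigma}(h,a)$, with $d_i^{\pi,\sigma}$ exactly the KL magnitude in Eq.~\ref{eq:KL divergence}. Because the $Q$-value in Eq.~\ref{eq:Q-value} is linear in the per-step payoff, the $Q$-value of the transformed game splits as $q_i^\pi(h,a)+\mu\sum_{h'a'\sqsupseteq ha}\rho^\pi(ha,h'a')\,d_i^{\pi,\sigma}(h',a')$; the second summand is precisely $\mu\,\delta_i^{\pi,\sigma}(h,a)$ by Eq.~\ref{eq:cumulative magunitude of perturbation}, so the transformed $Q$-value equals $q_i^{\pi,\sigma}(h,a)$. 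Weighting by $\rho_{-i}^\pi(h)$ and summing over $h\in x$ as in Eq.~\ref{eq:counterfactual value} then identifies the transformed counterfactual value with $v_i^{\pi,\sigma}(x,a)$ of Eq.~\ref{eq:perturbed counterfactual value}. As a consistency check one can verify that $q_i^{\pi,\sigma}$ obeys the Bellman-type recursion $q_i^{\pi,\sigma}(h,a)=u_i(h,a)+\mathds{1}_{i=\tau(h)}\mu\ln\tfrac{\sigma_i(a\mid x(h))}{\pi_i(a\mid x(h))}+\sum_{a'}\pi_{\tau(ha)}(a'\mid x(ha))\,q_i^{\pi,\sigma}(ha,a')$ used to define the transformed values in the source, and then substitute all of these equalities into Eq.~\ref{eq:perturbed ftrl}.

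The main obstacle is reconciling conventions rather than doing genuine mathematics: \citet{perolat2021poincare} phrase the transformation at the level of behavioral strategies and the sequence form, and their ``Reward Transformed FTRL'' may come with a particular sign convention, a specific regularizer (negative entropy), or a continuous-time / Euler-discretized update, so I would need to pin down the exact statement being referenced and check that the per-step terms, the indicator restricting the perturbation to player $i$'s own decision points, and the cumulative reach-probability weighting all line up. A minor additional point is confirming that our use of a \emph{fixed} anchoring strategy $\sigma$ matches their anchor and that no implicit discounting is assumed on either side.
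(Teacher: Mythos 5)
The paper omits the proof of this proposition, stating only that it is straightforward, so there is no official argument to compare against; your route --- identifying the KL magnitude $\mu\,d_i^{\pi,\sigma}(h,a)$ of Eq.~\ref{eq:KL divergence} with the per-step reward shift of Reward Transformed FTRL, using linearity of the $Q$-value in the instantaneous payoff to equate the transformed $Q$-value with $q_i^{\pi,\sigma}(h,a)$, and concluding that the counterfactual values and hence the $\argmax$ updates coincide --- is precisely the intended straightforward argument and is correct. Your closing caveats about the sequence-form phrasing, sign conventions, and continuous-time discretization in \citet{perolat2021poincare} are sensible bookkeeping checks but do not change the substance of the equivalence.
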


We omit the proofs of the propositions, as they are straightforward, but we present the proof of the following theorems in the appendix.

\subsection{Estimators of Perturbed Counterfactual Value}
\label{sec:estimators}

Since calculating the counterfactual values exactly is computationally demanding, we utilize an outcome sampling scheme where the agent learns only from his or her experience.
We define the estimators under outcome sampling, i.e., \textit{outcome sampling PFTRL} (OS-PFTRL). If the perturbation strength $\mu$ is zero, this is equivalent to the conventional FTRL and we refer to that as OS-FTRL in Section~\ref{sec:experiments}. 
Let us construct an unbiased estimator $\widetilde{v}_i^{\pi, \sigma}$ for the counterfactual value $v_i^{\pi, \sigma}$ in Eq.~\ref{eq:perturbed counterfactual value}. 
We define the set of sampled histories in $Y_j$ as $H_j =\{h \in \mathcal{H} ~|~ h\sqsubseteq z \land z\in Y_j\}$. 
We denote the probability of reaching \( h \) as \( p(h) = \sum_{j : h \in H_j} p_j \), and the probability of transitioning from \( h \) to \( h' \) as \( p(h, h') \).
Then, when $Y_j$ is sampled, our estimators are defined as follows:
\begin{align}
\widetilde{v}_i^{\pi, \sigma}(x, a) &=  \sum_{h \in x} \frac{\rho^{\pi}_{-i}(h)}{p(h)} \left( \widetilde{q}_i^{\pi}(h, a) + \mu \widetilde{\delta}_i^{\pi, \sigma}(h, a) \right), 
\label{eq:estimator_expected_culmmutive_perturbed_payoff_pre_info} \\
\widetilde{q}_i^{\pi}(h, a) &= \sum_{h'a' \sqsupseteq ha \land h'a' \in H_j} \frac{\rho^{\pi}(ha,h'a')}{p(h, h'a')} u_i(h', a'), \label{eq:estimator_expected_culmmutive_original_payoff_pre_his} 
\\
\widetilde{\delta}_i^{\pi, \sigma}(h, a) &=  \mathds{1}_{h \in H_j} d^{\pi, \sigma}_i(h, a) \nonumber\\
&+ \sum_{h' \sqsupseteq ha \land h' \in H_j}\sum_{a' \in A(h')}  \frac{\rho^{\pi}(ha, h'a')}{p(h, h')} d^{\pi, \sigma}_i(h', a').
\label{eq:estimator_expected_culmmutive_payoff_pertabation_pre_his}
\end{align}

Eq.~\ref{eq:estimator_expected_culmmutive_perturbed_payoff_pre_info} is decomposed into two parts. The first one, defined in Eq.~\ref{eq:estimator_expected_culmmutive_original_payoff_pre_his}, corresponds to the standard Q-value in Eq.~\ref{eq:Q-value}. 
The second one, defined in Eq.~\ref{eq:estimator_expected_culmmutive_payoff_pertabation_pre_his}, corresponds to the cumulative magnitude of perturbation in Eq.~\ref{eq:cumulative magunitude of perturbation}. 
In Eq.~\ref{eq:estimator_expected_culmmutive_payoff_pertabation_pre_his}, the magnitude of perturbation $d_i^{\pi, \sigma}(h,a)$ of the unsampled actions at each sampled history need not to be estimated because the value can always be calculated with ease.
%
Given a sampling scheme, we update the strategy $\pi_i^t$ by the update rule in Eq.~\ref{eq:perturbed ftrl}, which uses $\widetilde{v}_i^{\pi^t, \sigma}$ instead of $v_i^{\pi^t, \sigma}$:
The entire pseudocode of our proposed OS-PFTRL is shown as Algorithm~\ref{alg:ftrl_perturbed} in Supplementary Material~\ref{sec:algorithms}. 

We now prove that our proposed estimator
for PFTRL-RKL, not PFTRL-KL, satisfies the two desirable properties of unbiasedness and conditional zero variance. That is, the estimator for the perturbed counterfactual value in Eq.~\ref{eq:estimator_expected_culmmutive_perturbed_payoff_pre_info} is unbiased, while the variance of the estimator for the cumulative magnitude of perturbation in Eq.~\ref{eq:cumulative magunitude of perturbation} is zero as long as $d_i^{\pi,\sigma}(h,a)$ is given by Eq.~\ref{eq:reverse KL divergence}, irrespective of which sampling scheme is used. 
On the other hand, if it is given by Eq.~\ref{eq:KL divergence}, the zero variance property in Eq.~\ref{eq:estimator_expected_culmmutive_payoff_pertabation_pre_his} no longer holds. 
Thus,  the variance of the estimated counterfactual values in PFTRL-RKL is smaller than that in KL. This suggests that OS-PFTRL-RKL should outperform OS-PFTRL-KL, a topic which we will discuss later in this paper. 

We first show that the estimator $\widetilde{v}_i^{\pi, \sigma}$ for the perturbed counterfactual value in Eq.~\ref{eq:estimator_expected_culmmutive_perturbed_payoff_pre_info} is unbiased.
\begin{theorem}\label{thm:unbiased}
For any $i\in N$, $x \in X_i$, and $a \in A(x)$, the perturbed counterfactual value estimator $\widetilde{v}_i^{\pi, \sigma}(x, a)$ satisfies
\begin{align*}
\mathbb{E}_{j \sim p_j}[\widetilde{v}_i^{\pi, \sigma}(x, a)] = v_{i}^{\pi, \sigma}(x,a)
\end{align*}
where history $H_j$ is sampled with probability $p_j$.
\end{theorem}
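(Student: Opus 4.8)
The plan is to show unbiasedness by linearity, handling the two summands of Eq.~\ref{eq:estimator_expected_culmmutive_perturbed_payoff_pre_info} separately: the Q-value estimator $\widetilde q_i^\pi(h,a)$ must have expectation $q_i^\pi(h,a)$, and the perturbation estimator $\widetilde\delta_i^{\pi,\sigma}(h,a)$ must have expectation $\delta_i^{\pi,\sigma}(h,a)$. Since $\mathbb{E}_{j\sim p_j}$ is linear and the weights $\rho_{-i}^\pi(h)/p(h)$ are deterministic, it then follows immediately that $\mathbb{E}[\widetilde v_i^{\pi,\sigma}(x,a)] = \sum_{h\in x}\frac{\rho_{-i}^\pi(h)}{p(h)}\big(q_i^\pi(h,a)\cdot p(h)/\text{(appropriate reach factor)} + \mu\,\delta_i^{\pi,\sigma}(h,a)\cdot(\dots)\big)$; I need to be careful here that the $1/p(h)$ normalization in front is matched by a factor $p(h)$ (or rather $\mathbb{E}[\mathds{1}_{h\in H_j}]=p(h)$) coming out of the inner expectation, so that the surviving expression is exactly $\sum_{h\in x}\rho_{-i}^\pi(h)\big(q_i^\pi(h,a)+\mu\delta_i^{\pi,\sigma}(h,a)\big)=v_i^{\pi,\sigma}(x,a)$ by Eq.~\ref{eq:perturbed counterfactual value} and the definition $q_i^{\pi,\sigma}=q_i^\pi+\mu\delta_i^{\pi,\sigma}$.

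The core computation is the standard importance-sampling identity: for a fixed history-action pair $(h,a)$, write $\widetilde q_i^\pi(h,a)=\sum_{h'a'\sqsupseteq ha}\mathds{1}_{h'a'\in H_j}\frac{\rho^\pi(ha,h'a')}{p(h,h'a')}u_i(h',a')$. Taking $\mathbb{E}_{j\sim p_j}$ and pushing it inside the finite sum, each term contributes $\Pr[h'a'\in H_j\mid h\in H_j]\cdot\frac{\rho^\pi(ha,h'a')}{p(h,h'a')}u_i(h',a')$; since the conditional sampling probability of reaching $h'a'$ from $h$ is exactly $p(h,h'a')$, the ratio collapses to $1$ and we recover $\sum_{h'a'\sqsupseteq ha}\rho^\pi(ha,h'a')u_i(h',a')=q_i^\pi(h,a)$ by Eq.~\ref{eq:Q-value}. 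The identical argument applied termwise to Eq.~\ref{eq:estimator_expected_culmmutive_payoff_pertabation_pre_his} — noting that the $d_i^{\pi,\sigma}(h,a)$ factors are deterministic given the strategy and that the indicator $\mathds{1}_{h\in H_j}d_i^{\pi,\sigma}(h,a)$ plus the sum over $h'\sqsupseteq ha$ exactly mirrors the structure of Eq.~\ref{eq:cumulative magunitude of perturbation} once we include the self-term — gives $\mathbb{E}[\widetilde\delta_i^{\pi,\sigma}(h,a)]=\delta_i^{\pi,\sigma}(h,a)$. One subtlety to get right: the perturbation sum in Eq.~\ref{eq:estimator_expected_culmmutive_payoff_pertabation_pre_his} ranges over $h'$ (not $h'a'$) with an inner sum over all $a'\in A(h')$ weighted by $\rho^\pi(ha,h'a')/p(h,h')$ rather than $/p(h,h'a')$; I must verify this is consistent, i.e. that only reaching $h'$ needs to be sampled (the action $a'$ need not be, since all perturbation magnitudes at $h'$ are computable in closed form), and that $\rho^\pi(ha,h'a')/p(h,h')$ has the correct expectation — this uses $\rho^\pi(ha,h'a')=\rho^\pi(ha,h')\cdot\pi_{\tau(h')}(a'\mid x(h'))$ and $\mathbb{E}[\mathds{1}_{h'\in H_j}\mid h\in H_j]=p(h,h')$.

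The main obstacle is bookkeeping with the nested conditioning and the perfect-recall factorization of reach probabilities, especially reconciling the per-history normalizations $p(h)$, $p(h,h')$, and $p(h,h'a')$ so that every importance weight telescopes cleanly; there is no deep idea, but it is easy to misplace a factor. I would organize the proof by first fixing a terminal block $Y_j$, expanding $\widetilde v_i^{\pi,\sigma}(x,a)$ as a sum over sampled descendants, then taking the expectation over $j$ and regrouping. A clean way to avoid sign/index errors is to prove the auxiliary lemma $\mathbb{E}_{j\sim p_j}\big[\mathds{1}_{h'\in H_j}\,g(h')/p(h,h')\big]=\rho^\pi(h,h')\cdot(\text{conditional reach under }p)\big/\,p(h,h')\cdot g(h')$ — wait, more simply: $\mathbb{E}[\mathds{1}_{h'\in H_j}]=p(h')$ and $\mathbb{E}[\mathds{1}_{h'\in H_j}\mid h\in H_j]=p(h,h')$, used together with the assumption (implicit in outcome sampling with full support) that $p(h)>0$ whenever $\rho_{-i}^\pi(h)>0$ so no division by zero occurs — and then everything reduces to the elementary fact that an indicator divided by its own expectation is an unbiased estimator of $1$.
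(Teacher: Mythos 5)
Your plan is correct and matches the paper's proof in all essentials: both expand the estimator, exchange the sums over $j$ and over descendants so that $\sum_{j:h'a'\in H_j}p_j=p(h'a')$ (resp.\ $p(h')$, $p(h)$) cancels the importance weights via $p(h'a')=p(h)\,p(h,h'a')$, and recover $\sum_{h\in x}\rho_{-i}^{\pi}(h)\bigl(q_i^{\pi}(h,a)+\mu\delta_i^{\pi,\sigma}(h,a)\bigr)=v_i^{\pi,\sigma}(x,a)$. The subtleties you flag (the perturbation sum ranging over $h'$ with weight $1/p(h,h')$ and the self-term $\mathds{1}_{h\in H_j}d_i^{\pi,\sigma}(h,a)$ absorbing the $h'a'=ha$ term of Eq.~\ref{eq:cumulative magunitude of perturbation}) are exactly the points the paper's computation handles, and you resolve them the same way.
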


The next theorem shows that, conditioned on the event in which a history $h$ is sampled, the estimator $\widetilde{\delta}^{\pi, \sigma}_i(h, a)$ of the cumulative magnitude of perturbation in Eq.~\ref{eq:estimator_expected_culmmutive_payoff_pertabation_pre_his} has conditional zero variance.
\begin{theorem}
\label{thm:zero_variance}
Conditioned on the event where $Y_j$ is sampled, the value of the estimator $\widetilde{\delta}_i^{\pi, \sigma}(h, a)$ for any $h \in H_j$ and $a \in A(h)$ satisfies $\widetilde{\delta}^{\pi, \sigma}_i(h, a) = \delta^{\pi, \sigma}_i(h, a)$.
Therefore, we have for any $h\in H_j$ and $a \in A(h)$,
\begin{align*}
&\mathrm{Var}_{j\sim {p_j}}[\widetilde{\delta}_i^{\pi, \sigma}(h, a) ~|~ h \in H_j ] = 0. 
\end{align*}
\end{theorem}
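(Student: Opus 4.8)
The plan is to unwind the definition of the estimator $\widetilde\delta_i^{\pi,\sigma}(h,a)$ in Eq.~\ref{eq:estimator_expected_culmmutive_payoff_pertabation_pre_his} for the Reverse-KL perturbation magnitude of Eq.~\ref{eq:reverse KL divergence}, and show that the importance-sampling correction weights cancel exactly, so that the conditional value of the estimator equals the true quantity $\delta_i^{\pi,\sigma}(h,a)$ of Eq.~\ref{eq:cumulative magunitude of perturbation} regardless of which block $Y_j$ was drawn. Once the estimator is shown to be a deterministic function of $(h,a)$ (on the event $h\in H_j$), its conditional variance is trivially zero, which is the second claim.

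First I would fix a sampled block $Y_j$ and a history $h\in H_j$ with $a\in A(h)$. Since outcome sampling draws a single terminal history, on the event $h\in H_j$ the set of descendants of $ha$ that lie in $H_j$ forms a single path; write $h_0 a_0 = ha \sqsubseteq h_1 a_1 \sqsubseteq \cdots$ for the history–action pairs along it. For any such $h' \sqsubseteq z$ with $h'\in H_j$, the sampling transition probability factorizes along the path, $p(h,h') = \prod \,(\text{per-step sampling probabilities})$, exactly mirroring the factorization $\rho^\pi(ha,h'a') = \prod \pi_{\tau(\cdot)}(\cdot)$ of the reach probability. The key observation is that in Eq.~\ref{eq:estimator_expected_culmmutive_payoff_pertabation_pre_his} the inner sum $\sum_{a'\in A(h')} d_i^{\pi,\sigma}(h',a')$ is taken over \emph{all} actions at $h'$, not just the sampled one; substituting Eq.~\ref{eq:reverse KL divergence}, for a history $h'$ with $\tau(h')=i$ this inner sum telescopes to
\begin{align*}
\sum_{a'\in A(h')}\frac{1}{\pi_i(a'|x(h'))}\bigl(\sigma_i(a'|x(h'))-\pi_i(a'|x(h'))\bigr)\pi_i(a'|x(h')) = \sum_{a'}\sigma_i(a'|x(h')) - \sum_{a'}\pi_i(a'|x(h')) = 0,
\end{align*}
wait—more carefully, the factor that multiplies $d_i^{\pi,\sigma}(h',a')$ in the estimator is $\rho^\pi(ha,h'a')/p(h,h')$, and writing $\rho^\pi(ha,h'a') = \rho^\pi(ha,h')\,\pi_{\tau(h')}(a'|x(h'))$ the player-$i$ action probability $\pi_i(a'|x(h'))$ cancels against the $1/\pi_i(a'|x(h'))$ inside $d_i^{\pi,\sigma}$, leaving $\rho^\pi(ha,h')\,(\sigma_i(a'|x(h'))-\pi_i(a'|x(h')))/p(h,h')$; summing over $a'$ the $\sigma_i$ and $\pi_i$ terms each sum to $1$ and cancel, so every summand at depth beyond $ha$ vanishes and only the $\mathds{1}_{h\in H_j}d_i^{\pi,\sigma}(h,a)$ term survives. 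Comparing with Eq.~\ref{eq:cumulative magunitude of perturbation}, the true $\delta_i^{\pi,\sigma}(h,a)$ admits the same telescoping: $\delta_i^{\pi,\sigma}(h,a) = d_i^{\pi,\sigma}(h,a) + \sum_{h'a'\sqsupsetneq ha}\rho^\pi(ha,h'a')d_i^{\pi,\sigma}(h',a')$, and the tail sum is zero for the identical reason (grouping by $h'$ and summing over $a'\in A(h')$). Hence $\widetilde\delta_i^{\pi,\sigma}(h,a)=d_i^{\pi,\sigma}(h,a)=\delta_i^{\pi,\sigma}(h,a)$ on $\{h\in H_j\}$, which is independent of $j$.

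The main obstacle I anticipate is bookkeeping the two conventions that coexist in Eq.~\ref{eq:estimator_expected_culmmutive_payoff_pertabation_pre_his}: for a \emph{sampled} descendant $h'$ the estimator already sums the perturbation magnitude over all its actions (so no importance weight on that last action), whereas the path from $ha$ down to $h'$ is importance-weighted by $\rho^\pi(ha,h'a')/p(h,h')$ using the sampled actions $a'$. One must be careful that the $\mathds{1}_{i=\tau(h')}$ indicator in Eq.~\ref{eq:reverse KL divergence} kills the opponent/chance layers (so the telescoping-to-zero argument is applied only at player-$i$ nodes, while at other nodes $d_i^{\pi,\sigma}\equiv 0$ already makes the term vanish), and that the reach-probability factorization $\rho^\pi(ha,h'a')=\rho^\pi(ha,h')\pi_{\tau(h')}(a'|x(h'))$ is used with the \emph{full} strategy $\pi$, not the decomposed $\rho^\pi_{-i}$. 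Modulo this careful indexing the argument is essentially the telescoping identity above; the conditional-variance statement then follows immediately since a conditionally constant random variable has conditional variance $0$.
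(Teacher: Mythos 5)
Your proposal is correct and takes essentially the same route as the paper's proof: both arguments reduce the true $\delta_i^{\pi,\sigma}(h,a)$ and the estimator $\widetilde{\delta}_i^{\pi,\sigma}(h,a)$ to the single immediate term $\frac{1}{\pi_i(a|x(h))}\bigl(\sigma_i(a|x(h))-\pi_i(a|x(h))\bigr)$ by factorizing $\rho^\pi(ha,h'a')=\rho^\pi(ha,h')\,\pi_{\tau(h')}(a'|x(h'))$, cancelling the action probability against the $1/\pi$ inside the Reverse-KL magnitude, and invoking $\sum_{a'}\sigma_{\tau(h')}(a'|x(h'))=\sum_{a'}\pi_{\tau(h')}(a'|x(h'))=1$ so every deeper layer vanishes. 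The equality $\widetilde{\delta}_i^{\pi,\sigma}(h,a)=\delta_i^{\pi,\sigma}(h,a)$ on the event $h\in H_j$ then gives zero conditional variance exactly as in the paper.
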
 
This theorem means that our estimator $\widetilde{\delta}^{\pi, \sigma}_i(h, a)$ can predict the exact value of the cumulative magnitude of perturbation $\delta^{\pi, \sigma}_i(h, a)$.
This is attributed to a unique feature of PFTRL-RKL, where the expected value is zero for any strategy $\pi$ and any history $h$:
\begin{align}
\mathbb{E}_{a\sim \pi_i(\cdot | x(h))}\left[\delta^{\pi, \sigma}_i(h, a)\right] = 0
\label{eq:zero_mean}    
\end{align}
as illustrated in Figure~\ref{fig:expected_perturbation_payoff}. 
All proofs are given in Supplementary Material~\ref{sec:appx_proof_of_proposed} due to space constraints. 

\begin{figure}[t]
    \centering
    \includegraphics[width=0.75\linewidth]{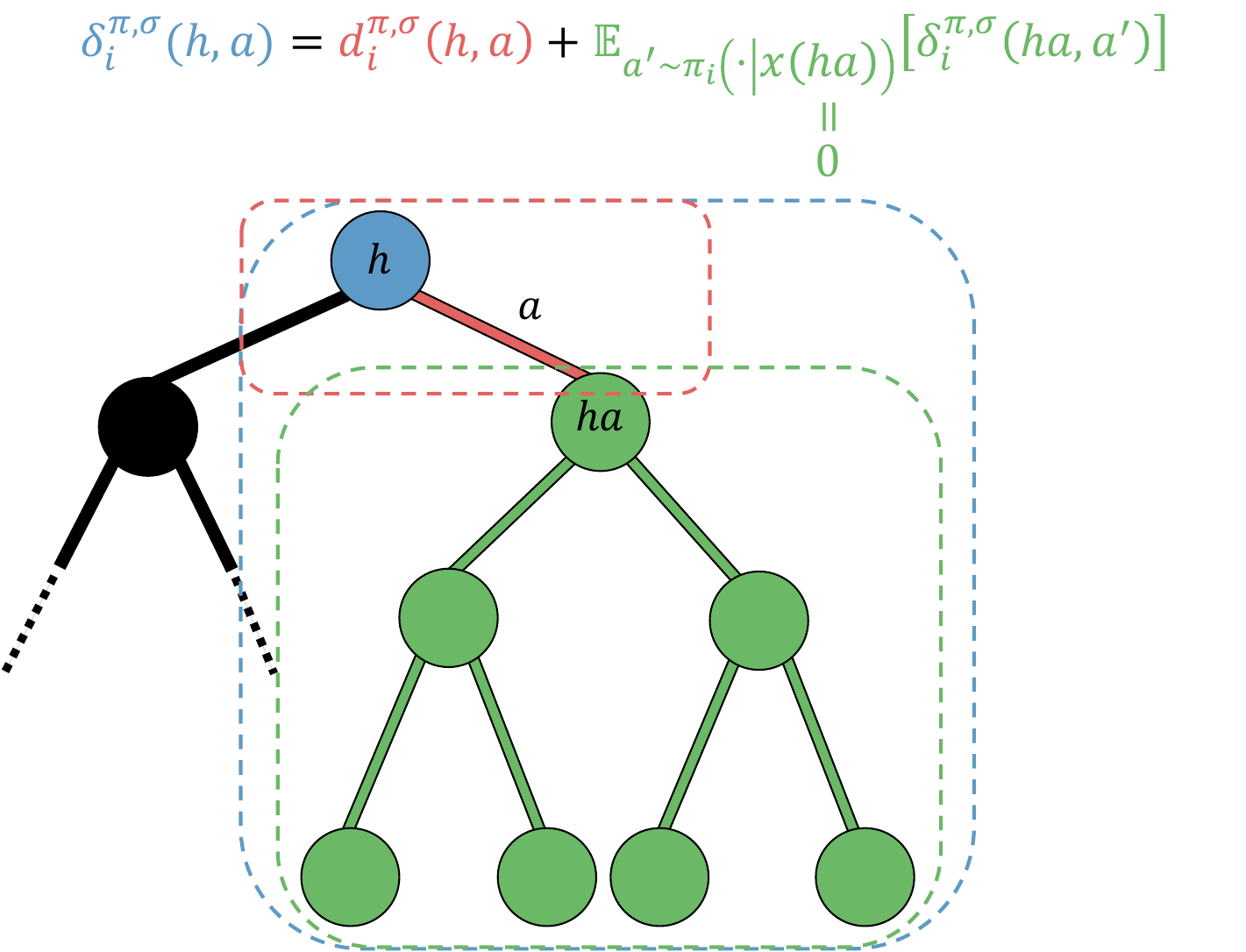}
    \caption{
    Illustration of the cumulative magnitude of perturbation $\delta_i^{\pi, \sigma}(h, a)$ with PFTRL-RKL payoff perturbation $d_i^{\pi, \sigma}$ at a history $h$ and an action $a$.
    $\delta_i^{\pi, \sigma}(h, a)$ is decomposed into two terms: 1) the immediate perturbation payoff (red color); 2) and the expected value of the cumulative magnitude of perturbation at a history $ha$,  $\mathbb{E}_{a'\sim \pi_i(\cdot | x(ha))}\left[\delta_i^{\pi, \sigma}(ha, a')\right]$ (green color). The figure highlights a key property  of PFTRL-RKL, where the expected value of the cumulative magnitude of perturbation is zero.
    }
    \label{fig:expected_perturbation_payoff}
    \Description[Illustration of the Q-value]{Illustration of the Q-value}
\end{figure}

This conditional zero-mean property enables us to simplify the perturbed counterfactual value estimator $\widetilde{v}_i^{\pi, \sigma}$. 
This property is desirable because it ensures that, given a sampled history, the perturbation magnitude is exactly known without additional randomness. 
We then describe how we represent the estimator under outcome sampling~\citep{lanctot2013monte}, which is also referred to as bandit feedback~\citep{kozuno2021learning,bai2022nearoptimal,fiegel2023adapting}. Since only a single terminal history is sampled according to a predefined sampling strategy $\pi'$ at each iteration, we have the set of the sampled histories is a singleton, that is, $|Y_j|=1$ for all $Y_j\in \mathcal{Y}$.

To sample histories, the $\varepsilon$-Greedy strategy is often adopted, which samples a history employing the current strategy with probability $1-\varepsilon$ and the uniform strategy that selects an action at random, otherwise. Formally, let us define $\pi_i'(a | x)$ as $(1 - \epsilon)\pi_i(a | x) + \frac{\epsilon}{|A(x)|}$ for any $x\in X_i$ and $a\in A(x)$ and $\pi_{-i}'$ as the current strategy of the opponents $-i$, $\pi_{-i}$.
However, it has recently been known that just the uniform strategy, i.e., setting $\varepsilon=1$, exhibits the advantage over the $\varepsilon$-Greedy strategy in the context of counterfactual regret minimization~\cite{mcaleer:iclr:2023}. 
We will discuss the difference in Section~\ref{sec:discussions} 
and, for simplicity, we adopt the uniform strategy to sample histories, unless noted. 

With Eq.~\ref{eq:zero_mean}, given a sampling strategy $\pi'$, let us rewrite the perturbed counterfactual value estimator $\widetilde{v}_{i}^{\pi, \sigma}(x, a)=$
\begin{align*}
& \!\!\sum_{h \in x \cap H_j} \frac{1}{\rho_{i}^{\pi'}(ha)} \!\sum_{h'a' \sqsupseteq ha \land h'a' \in H_j} \!\frac{\rho^{\pi}_{i}(ha, h'a')}{\rho^{\pi'}_{i}(ha, h'a')}u_i(h',a') \\
&\phantom{=} + \mu\sum_{h \in x \cap H_j} \frac{1}{\rho^{\pi'}_{i}(h)\pi_i(a|x)}\left(\sigma_i(a|x) - \pi_i(a|x)\right). 
\end{align*}
Algorithm~\ref{alg:outcome sampling} in Supplementary Material~\ref{sec:algorithms} illustrates the procedure for estimation of $\widetilde{v}_i^{\pi, \sigma}$ under outcome sampling.

In addition, as a benchmark, let us consider \textit{full walk}, or the full game-tree traversal scheme, where we sample all the terminal histories at each iteration, i.e., $\mathcal{Y}=\{Z\}$.
In this scheme, the perturbed counterfactual value estimator $\widetilde{v}_i^{\pi, \sigma}$ is equivalent to the exact perturbed counterfactual value $v_i^{\pi, \sigma}$.
Eq.~\ref{eq:zero_mean} enables us to simplify $\widetilde{v}_i^{\pi, \sigma}(x, a)$ for any $x\in X_i$ and $a\in A(x)$ and we obtain $\widetilde{v}_i^{\pi, \sigma}(x, a) = v_i^{\pi, \sigma}(x, a)=$
\begin{align*}
& \sum_{h \in x} \rho^{\pi}_{-i}(h) \left(q_i^{\pi}(h,a) + \frac{\mu}{\pi_i(a|x)}\left(\sigma_i(a | x) - \pi_i(a | x)\right) \right).
\end{align*}
We provide the pseudocode for calculating $v_i^{\pi, \sigma}$ by utilizing this equation as Algorithm~\ref{alg:full_information} in Supplementary Material~\ref{sec:algorithms}.

\begin{figure}[tb]
    \centering
    \includegraphics[width=0.96\linewidth]{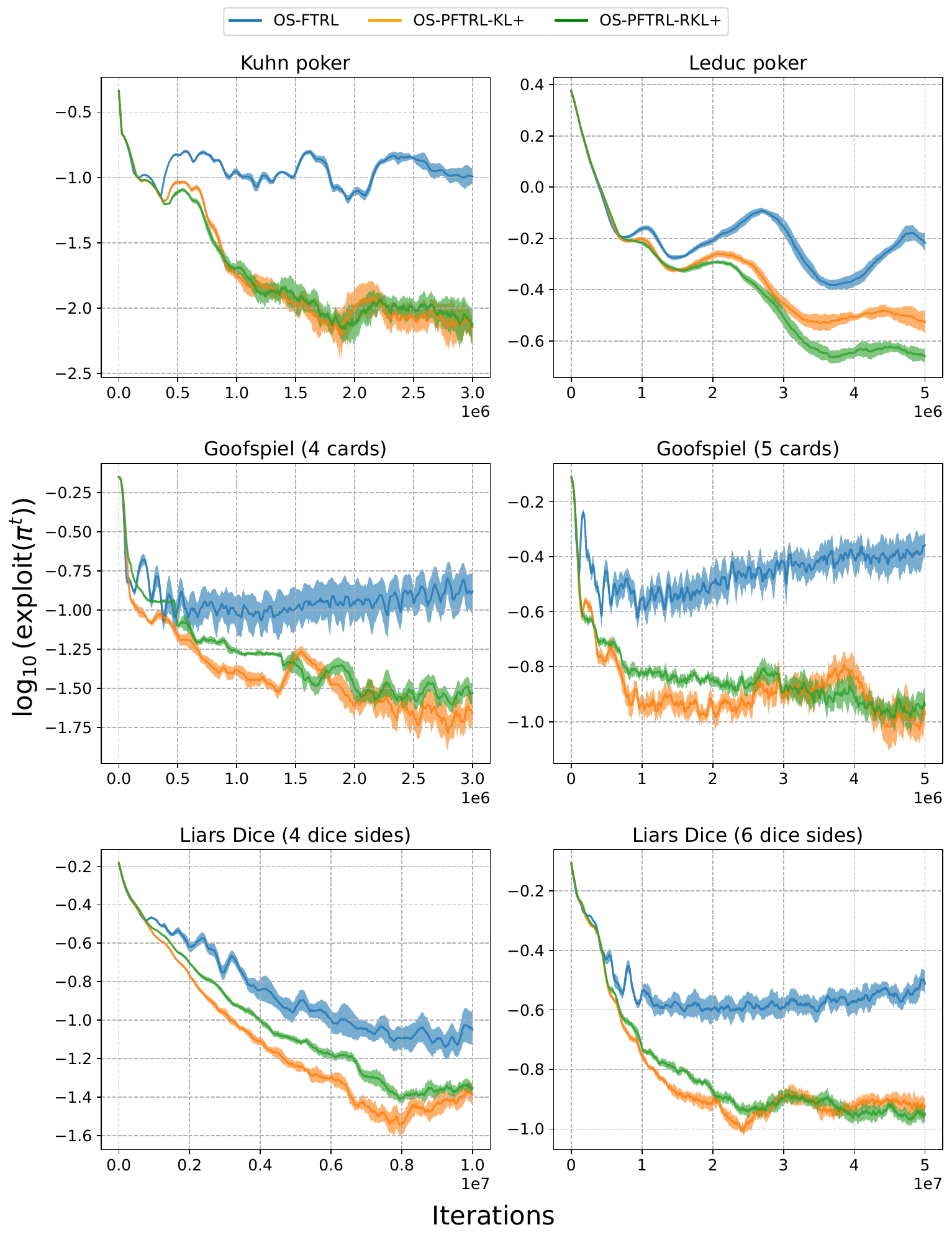}
    \caption{Exploitability of last-iterate $\pi^t$ under outcome sampling.}
    \label{fig:exploitability-last-iterate}
    \Description[Exploitability]{Exploitability of last-iterate $\pi^t$}
\end{figure}


\section{Experiments}
\label{sec:experiments}

In this section we investigate how well the FTRL-based algorithms 
perform on the common benchmark games: Kuhn poker~\citep{kuhn1951simplified}, Leduc poker~\citep{southey2005bayes}, Goofspiel~\citep{lanctot2013monte}, and Liar's Dice~\citep{ferguson1991models}. 
The number of information sets for each game is as follows: Kuhn poker has $12$, Leduc poker has $936$, Goofspiel with $4$ cards has $162$ and with $5$ cards has $2,124$, while Liar's Dice with $4$ dice sides has $1,024$ and with $6$ dice sides has $24,576$. We used the OpenSpiel~\citep{LanctotEtAl2019OpenSpiel} framework for our experiments. 

We built the \textit{anchoring strategy update}~\cite{perolat2021poincare,abe2023last,abe:icml:2024} into perturbed FTRL under sampling. Every time information set $x$ is visited $T_{\sigma} \leq T$ times, we replaced 
the anchoring strategy $\sigma_i(\cdot | x)$ for each $x$ with the current strategy $\pi_i^t(\cdot | x)$. We refer to the algorithms with this procedure as PFTRL-RKL+ and -KL+, respectively. The details are in Supplementary Material~\ref{sec:anchoring strategy updates}. 

We used the constant learning rate $\eta=0.0001$ for outcome sampling and the perturbation strength $\mu=0.1$ for perturbed FTRL. We initialize the anchoring strategy 
uniformly: 
$\sigma_{i}(\cdot | x) = (1/|A(x)|)_{a \in A(x)}$ for each information set $x$.
It is updated every $T_{\sigma}=100,000$ visits under outcome sampling. Assume that the sampling strategy is uniform~\cite{mcaleer:iclr:2023}. The exploitability is averaged across $10$ random seeds for each algorithm and is presented on a logarithmic scale.
We use the entropy regularizer $\psi_i(\pi(\cdot | x)) = \sum_{a\in A(x)}\pi(a|x)\ln \pi(a|x)$ in all experiments.

\begin{figure*}[tb]
    \centering
    \includegraphics[width=0.89\linewidth]{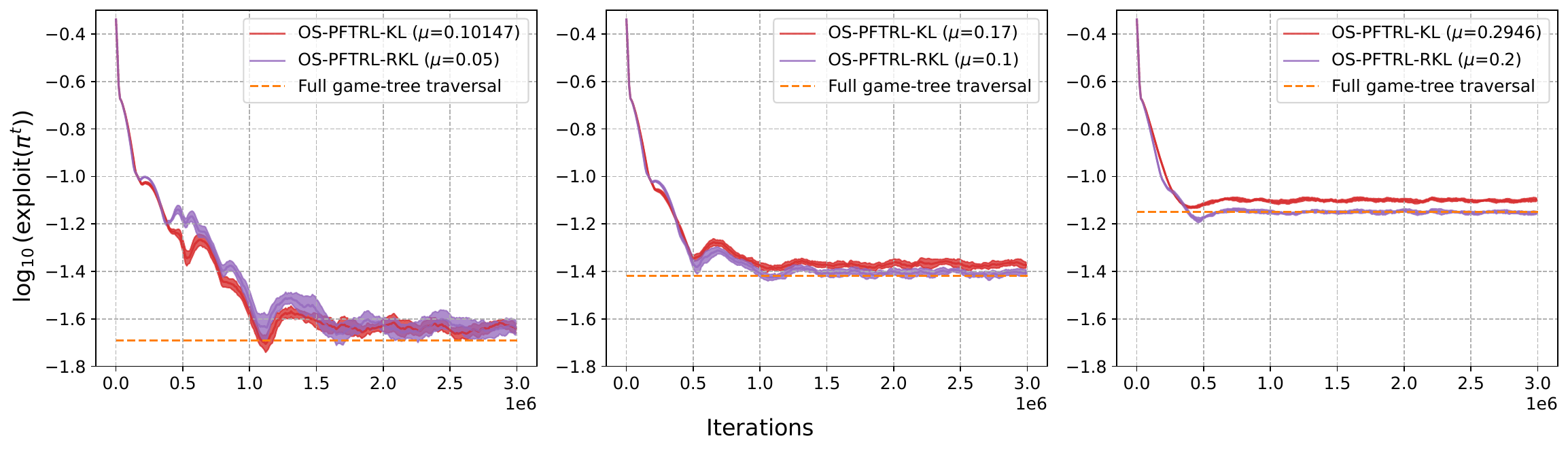}
    \caption{Exploitability difference of last-iterate between OS-PFTRL-RKL and -KL in Kuhn poker, with varying tuned perturbation strengths.}
    \label{fig:effect of perturabtion strength}
    \Description[Effect of perturbation strength]{Effect of perturbation strength}
\end{figure*}

\boldtypefaceseries
Figure~\ref{fig:exploitability-last-iterate} shows the exploitability of the last-iterate strategies $\pi_t$ under outcome sampling.\footnote{Results under full game-tree traversal and external sampling are presented in Supplementary Materials C and D.} Both perturbed variants, OS-PFTRL-KL+ and 
-RKL+, consistently outperform the unperturbed OS-FTRL, exhibiting faster convergence and lower exploitability. Notably, OS-PFTRL-RKL+ achieves a significant advantage in Leduc poker, where it outperforms KL+ in terms of both convergence rate and final exploitability. While KL+ modestly outperforms RKL+ in most of the other games, RKL+ consistently delivers stable results with smooth convergence behavior. These results suggest that payoff perturbation not only stabilizes FTRL dynamics but also enhances last-iterate performance in imperfect-information games, even under sampling, consistent with prior findings~\cite{perolat2021poincare,sokota2022unified}. However, that perturbation often performs poorly with regard to the average-iterate sense in Figure~\ref{fig:average iterate OS} of Supplementary Material~\ref{sec:comparison with cfr/cfr+}. 

To more precisely isolate the benefits of PFTRL-RKL, we design a controlled comparison that highlights the effect of its conditional zero variance property; the result is illustrated in Figure~\ref{fig:effect of perturabtion strength}. This procedure consists of three steps:
\begin{enumerate}
    \item We first fix a perturbation strength (e.g., $\mu = 0.05$) and run PFTRL-RKL using full game-tree traversals (the orange dashed line), observing its converged exploitability (e.g.,~$10^{-1.7}$);
    \item We then tune the perturbation strength for PFTRL-KL so that it reaches comparable exploitability under the same full-traversal setting. For example, PFTRL-KL with $\mu =0.010147$ yields the same level of converged exploitability as PFTRL-RKL with $\mu=0.05$;  
    \item  Using these matched settings, we switch to outcome sampling and examine whether OS-PFTRL-RKL (the purple line) still maintains an advantage over OS-PFTRL-KL (the red line).
\end{enumerate}
Since the perturbation magnitudes are calibrated for equivalent baseline performance, any observed differences under sampling can be attributed to the form of the perturbation, rather than its strength. 

Figure~\ref{fig:effect of perturabtion strength} presents the exploitabilities of OS-PFTRL-KL (the red line) and -RKL (the purple line) in Kuhn poker under outcome sampling, while presents that under full game-tree traversals (the orange dashed line), with varying perturbation strengths of OS-PFTRL-RKL ($\mu\in\{0.05, 0,1, 0.2\}$). The x-axis represents the number of iterations, and the y-axis measures exploitability on a logarithmic scale. Note that we do not apply the anchoring strategy update and it is expected that 
those three lines (oragen, purple, and red)  
do not achieve zero exploitability.

As the perturbation strength increases ($\mu \in {0.05, 0.1, 0.2}$ for RKL), a clear pattern emerges. At the lowest setting, both variants perform similarly under sampling. At moderate strength, RKL begins to outperform KL modestly. At the highest setting, PFTRL-RKL achieves significantly lower exploitability after 500{,}000 iterations.
%
Overall, this experiment reveals that PFTRL-RKL’s conditional zero variance property can lead to more robust convergence under sampling, especially when perturbation strength is larger. 
\mdseries

\section{Discussion}
\label{sec:discussions}
It is important to note that implementing optimistic variants under outcome sampling is not straightforward. Simply introducing optimism, as in Optimistic FTRL or Dilated OMWU~\cite{lee2021last}, is insufficient because the prediction vector becomes contaminated by noise or errors arising from estimates of the payoffs%
~\cite{abe2023last}.
While it remains an open question whether combining optimism with perturbation, such as in Reg-DOMWU and Reg-DOGDA~\cite{liu2022power}, is effective, designing an estimator for the prediction vector is nontrivial, making  implementation more complicated. Nonetheless, exploring whether perturbation can make optimistic variants feasible under sampling is a promising direction for future work. 

\boldtypefaceseries
Another important avenue for future research involves the modern CFR variants~\cite{brown2019solving,Farina21:Faster,Zhang24:Faster} that are the current state of the art 
in the average-iterate sense, under full game-tree traversals. Meanwhile, under outcome sampling, 
ESCHER~\cite{mcaleer:iclr:2023} have shown strong performance.
Our experiments using tabular representations in Supplementary Material~\ref{sec:comparison with cfr/cfr+} suggest that, in the last-iterate sense, our perturbation-based algorithms outperform the standard CFR and generally outperform CFR+ across most games. However, intriguingly, CFR+ performs the best in Leduc poker and Liars dice. 

In addition, perturbed variants of CFR have been proposed, such as \textit{Reg-CFR}~\cite{liu2022power} and \textit{Reward Transformation CFR+}~\cite{meng::2023}. We implement the latter as \textit{Perturbed CFR+-L2+} (PCFR+-L2+), which is the current state of the art under full game-tree traversal in the last-iterate sense. 
In the average-iterate sense, as we see in FTRL, perturbation does not improve CFR+. 
However, in the last-iterate sense, perturbation improve CFR+, since OS-PCFR+-L2+ consistently outperforms OS-CFR+. Nevertheless, it is outperformed by OS-FTRL-RKL+, -KL+, or both,  across most games except Liars Dice with 6 dice sides. The details are shown in Supplementary Material~\ref{sec:comparison with perturbed cfr+}. 
%
It remains an open and important question how perturbation can be effectively integrated with CFR-based algorithms to attain better last-iterate convergence.
\mdseries

\section{Conclusions}
\label{sec:conclusion}
We investigated how perturbing payoffs affects FTRL-based algorithms under outcome sampling in solving extensive-form, imperfect-information games. 
We devised a simple one-line modification to PFTRL-KL, which we coined PFTRL-RKL, whose theoretical variance in the estimator for counterfactual values is smaller than that of PFTRL-KL. We empirically showed that perturbation consistently improves performance in the last-iterate sense, which reduces memory and computational requirements compared to having to average the iterates. 


\bibliographystyle{aaai2026}
\bibliography{references}

\appendix
\onecolumn
\section{Proofs for Section~\ref{sec:os-pftrl}}
\label{sec:appx_proof_of_proposed}
\begin{proof}[Proof of Theorem \ref{thm:unbiased}]
\label{prf:unbiased}
\begin{align*}
&\mathbb{E}_{j \sim p_j}[\widetilde{v}_i^{\pi, \sigma}(x, a; j)]  \\ 
&= \sum_{j} p_j\sum_{h \in x}\frac{\rho^\pi_{-i}(h)}{p(h)}\tilde q_i^{\pi,\sigma}(h,a) \\
&= \sum_{j} p_j \sum_{h \in x} \frac{\rho^\pi_{-i}(h)}{p(h)} \left\{ \tilde q_i^\pi(h,a) + \mu\tilde\delta^{\pi,\sigma}_i(h,a) \right\} \\
    &\begin{aligned}
    = \sum_{j} p_j \sum_{h \in x}\frac{\rho^\pi_{-i}(h)}{p(h)} &\left\{ \sum_{h'a'\sqsupseteq ha\wedge h'a'\in H_j}\frac{\rho^\pi(ha,h'a')}{p(h,h'a')}u_i(h',a') + \mu\mathds{1}_{h \in H_j}d_i^{\pi,\sigma}(h,a)  \right.\\
    &\left.  + \mu\sum_{h'\sqsupseteq ha\land h'\in H_j}\sum_{a'\in A(h')}\frac{\rho^\pi(ha,h'a')}{p(h,h')}d_i^{\pi,\sigma}(h',a') \right\}
    \end{aligned} \\
&= \sum_{j}p_j \sum_{h \in x} \rho^{\pi}_{-i}(h) \left\{   \sum_{h'a' \sqsupseteq ha \land h'a' \in H_j} \frac{\rho^{\pi}(ha,h'a')}{p(h'a')} u_i(h', a')+ \mu \frac{\mathds{1}_{h \in H_j}}{p(h)} d^{\pi, \sigma}_i(h, a) \right. \\
& \left. \phantom{\sum_{j}p_j \sum_{h \in x} \rho^{\pi}_{-i}(h)=} + \mu \sum_{h' \sqsupseteq ha \land h' \in H_j} \sum_{a' \in A(h')} \frac{\rho^{\pi}(ha,h'a')}{p(h')} d^{\pi, \sigma}_i(h', a') \right\}  \\
&=\sum_{h \in x} \rho^{\pi}_{-i}(h) \sum_{h'a' \sqsupseteq ha } \frac{\rho^{\pi}(ha,h'a')}{p(h'a')} u_i(h', a') \sum_{j: h'a' \in H_j}p_j \\
&\phantom{\sum_{j}p_j \sum_{h \in x} \rho^{\pi}_{-i}(h)=}+ \mu \sum_{h \in x} \frac{\rho^{\pi}_{-i}(h)}{p(h)} d^{\pi, \sigma}_i(h, a) \sum_{j: h \in H_j}p_j
+ \mu \sum_{h \in x} \rho^{\pi}_{-i}(h) \sum_{h' \sqsupseteq ha }  \sum_{a' \in A(h')} \frac{\rho^{\pi}(ha,h'a')}{p(h')} d^{\pi, \sigma}_i(h', a') \sum_{j: h' \in H_j}p_j \\
&= \sum_{h \in x} \rho^{\pi}_{-i}(h) \sum_{h'a' \sqsupseteq ha} \rho^{\pi}(ha,h'a') u_i(h', a') 
+ \mu \sum_{h \in x} \rho^{\pi}_{-i}(h) \left( d^{\pi, \sigma}_i(h, a) + \sum_{h' \sqsupseteq ha} \sum_{a' \in A(h')} \rho^{\pi}(ha,h'a') d^{\pi, \sigma}_i(h', a')  \right) \\
&= \sum_{h \in x} \rho^{\pi}_{-i}(h) q_i^\pi(h,a) + \mu\sum_{h\in x}\rho^\pi_{-i}(h)\delta_{i}^{\pi, \sigma}(h,a) \\
&= \sum_{h \in x} \rho_{-i}^\pi(h)\left( q_{i}^\pi(h,a) + \mu\delta_{i}^{\pi,\sigma}(h,a)\right) \\
&= \sum_{h \in x}\rho_{-i}^{\pi}(h) q_i^{\pi, \sigma}(h,a) \\
\\ &= v_{i}^{\pi, \sigma}(x,a).
\end{align*}
\end{proof}

\begin{proof}[Proof of Theorem \ref{thm:zero_variance}]
\label{prf:zero_variance}

First, since $\sum_{a'\in A(h')}\sigma_{\tau(h')}(a'|x(h'))=1$ and $\sum_{a'\in A(h')}\pi_{\tau(h')}(a'|x(h'))=1$ for any $h'\in H\setminus Z$, 
the cumulative magnitude of perturbation 
$\delta_i^{\pi, \sigma}(h, a)$ can be rewritten as:
\begin{align*}
\delta^{\pi, \sigma}_i(h, a)  &= \sum_{h'a' \sqsupseteq ha} \rho^{\pi}(ha, h'a') d_i^{\pi, \sigma}(h', a') \\
&= \sum_{h'a' \sqsupseteq ha}\rho^{\pi}(ha,h'a') 
 \left( \frac{\mathds{1}_{i=\tau(h')}}{\pi_{\tau(h')}(a'|x(h'))}\left(\sigma_{\tau(h')}(a'|x(h')) - \pi_{\tau(h')}(a'|x(h'))\right)  \right) \\
&= \frac{1}{\pi_{i}(a|x(h))} \left(\sigma_{i}(a|x(h)) - \pi_{i}(a|x(h))\right)
+ \sum_{h' \sqsupseteq ha} \rho^{\pi}(ha, h') \mathds{1}_{i=\tau(h')}\sum_{a' \in A(h')} 
\left(\sigma_{\tau(h')}(a'|x(h')) - \pi_{\tau(h')}(a'|x(h'))\right) \\
&= \frac{1}{\pi_{i}(a|x(h))} \left(\sigma_{i}(a|x(h)) - \pi_{i}(a|x(h))\right).
\end{align*}

On the other hand, from the definition of the estimator $\widetilde{\delta}_i^{\pi, \sigma}(h, a;j)$, we have for any $h\in H$ and $a \in A(h)$:
\begin{align*}
\widetilde{\delta}_i^{\pi, \sigma}(h, a;j) &= \mathds{1} [h \in H_j]d^{\pi, \sigma}_i(h, a) + \sum_{h' \sqsupseteq ha \land h' \in H_j}\sum_{a' \in A(h')}  \frac{\rho^{\pi}(ha,h'a')}{p(h, h')} d^{\pi, \sigma}_i(h', a') \\
&=  \mathds{1} [h \in H_j] \frac{1}{\pi_{i}(a|x(h))} \left(\sigma_{i}(a|x(h)) - \pi_{i}(a|x(h))\right)  \\
&\phantom{=\mathds{1}_{h \in H_j}} + \sum_{h' \sqsupseteq ha \land h' \in H_j}\sum_{a' \in A(h')}  \frac{\rho^{\pi}(ha,h'a')}{p(h, h')}  
 \left( \frac{\mathds{1}_{i=\tau(h')}}{\pi_{\tau(h')}(a'|x(h'))}\left(\sigma_{\tau(h')}(a'|x(h')) - \pi_{\tau(h')}(a'|x(h'))\right)  \right) \\
&= \mathds{1} [h \in H_j] \frac{1}{\pi_{i}(a|x(h))} \left(\sigma_{i}(a|x(h)) - \pi_{i}(a|x(h))\right)  \\
&\phantom{=\mathds{1}_{h \in H_j}} + \sum_{h' \sqsupseteq ha \land h' \in H_j} \frac{\rho^{\pi}(ha,h')}{p(h, h')} \mathds{1}_{i=\tau(h')}\sum_{a' \in A(h')} \left(\sigma_{\tau(h')}(a'|x(h')) - \pi_{\tau(h')}(a'|x(h'))\right) \\
&= \mathds{1}_{h \in H_j}\frac{1}{\pi_{i}(a|x(h))} \left(\sigma_{i}(a|x(h)) - \pi_{i}(a|x(h))\right) .
\end{align*}
Therefore, under the event where $Q_j$ is sampled, we have for any $h \in H_j$ and $a \in A(h)$:
\begin{align*}
\widetilde{\delta}_i^{\pi, \sigma}(h, a;j) = \frac{1}{\pi_{i}(a|x(h))} \left(\sigma_{i}(a|x(h)) - \pi_{i}(a|x(h))\right)
\end{align*}

Hence, combining these, for any $h \in H_j$ and $a \in A(h)$:
\begin{align*}
&\widetilde{\delta}^{\pi, \sigma}_i(h, a; j) = \delta^{\pi, \sigma}_i(h, a).
\end{align*}
In other words,
\begin{align*}
&\mathrm{Var}_{j\sim {p_j}}[\tilde \delta_i^{\pi, \sigma}(h, a; j) ~|~ h \in H_j ] = 0. \\
\end{align*}
\end{proof}

\section{Algorithms}
\label{sec:algorithms}

This section details the proposed algorithms in this paper. Algorithm~\ref{alg:ftrl_perturbed} outlines the complete learning process for the variants of PFTRL, namely PFTRL-RKL+ and PFTRL-KL+. If the update interval \( T_{\sigma} \) is set to \(\infty\), the anchoring strategy remains fixed, corresponding to the case where no anchoring strategy update occurs, i.e., PFTRL-RKL and PFTRL-KL. 
Algorithm~\ref{alg:full_information} is used to compute the counterfactual value \( v_{i}^{\pi^{t}, \sigma} \) under full game-tree traversals, while Algorithm~\ref{alg:outcome sampling} estimates the counterfactual value \( \widetilde{v}_{i}^{\pi^{t}, \sigma} \) under outcome sampling.
Note that Algorithm~\ref{alg:outcome sampling} utilizes $\varepsilon$-Greedy to sample trajectories. According to ESCHER~\cite{mcaleer:iclr:2023}, unless noted otherwise, we set $\varepsilon$ to one throughout the experiments in this paper.

\begin{figure}[t!]
\makeatletter\let\@latex@error\@gobble\makeatother
\begin{algorithm}[H]\small
\caption{\normalfont{PFTRL-RKL+ and -KL+.}}
\label{alg:ftrl_perturbed}
\DontPrintSemicolon
\Require{Time horizon $T$, learning rate $\eta$, mutation parameter $\mu$, update interval $T_{\sigma}$}
$\pi_i^1(\cdot | x) \gets \left(\frac{1}{|A(x)|}\right)_{a\in A(x)}$ for all $i\in N$ and $x\in X_i$\;
$\sigma \gets \pi^1$\;
$\kappa_i[x] \gets 0$ for all $i\in N$ and $x\in X_i$\;
\For{$t=1,2\cdots,T$}{
    \For{$i\in N$}{
        Estimate the perturbed counterfactual values $v_{i}^{\pi^{t}, \sigma}$ by Algorithms \ref{alg:full_information} 
        or \ref{alg:outcome sampling} with inputs $(\pi^t, i, \mu, \sigma)$\;
        Let $X_{i, \mathrm{vst}}^t$ be the player $i$'s information sets visited at iteration $t$\;
        \For{$x\in X_{i, \mathrm{vst}}^t$}{
            Update the strategy by $\pi^{t+1}_{i}(\cdot|x) = \argmax_{\pi\in \Delta(A(x))} \left\{\eta\left\langle \sum_{s=1}^t v_i^{\pi^{s}, \sigma}(x,\cdot), \pi\right\rangle - \psi_i(\pi)\right\}$\;
            $\kappa_i[x] \gets \kappa_i[x] + 1$;\;
            \If{$\kappa_i[x] = T_{\sigma}$}{
                $\sigma_i(\cdot|x) \gets \pi_i^{t+1}(\cdot|x)$\;
                $\kappa_i[x] \gets 0$\;
            }
        }
    }
}
\end{algorithm}
    \Description[PFTRL-RKL+ and -KL+.]{PFTRL-RKL+ and -KL+.}
\end{figure}

\begin{figure}[t!]
\makeatletter\let\@latex@error\@gobble\makeatother
\begin{algorithm}[H]
\caption{\normalfont$\textsc{ValueCompute}(\pi, i, \mu, \sigma)$ for full game-tree traversals.}
\label{alg:full_information}
\DontPrintSemicolon
$v_i^{\pi, \sigma}(x, a) \gets 0$ for all $x\in X_i$ and $a\in A(x)$\;

\Subr{\normalfont$\textsc{Traverse}\left(h, i, \rho_{-i}\right)$}{
    \uIf{$h\in Z$} {
        \Return $0$\;
    }
    \uElseIf{$\tau(h)= c$}{
        \Return $\sum_{a\in A(h)}\pi_c(a | h) \cdot \left(\textsc{Traverse}\left(ha, i, \pi_c(a | h)\cdot\rho_{-i}\right) + u_i(h, a)\right)$\;
    }
    \ElseIf{$\tau(h)\neq i$}{
        \Return $\sum_{a\in A(x)}\pi_{\tau(h)}(a | x(h)) \cdot \left(\textsc{Traverse}\left(ha, i, \pi_{\tau(h)}(a | x(h))\cdot\rho_{-i}\right) + u_i(h, a)\right)$\;
    }
    Let $x$ be the information set containing $h$\;
    \uIf{$\tau(h)=i$}{
        $q_i[h] \gets 0$\;
        $q_i[h, a] \gets 0$ for all $a\in A(x)$\;
        \For{$a\in A(x)$}{
            $q_i[h, a] \gets \textsc{Traverse}(ha, i, \rho_{-i}) + u_i(h, a) + \mu d_i^{\pi,\sigma}(h, a)$\;
            $v_i^{\pi, \sigma}(x, a) \gets v_i^{\pi, \sigma}(x, a) + \rho_{-i} \cdot q_i[h, a]$\;
            $q_i[h] \gets q_i[h] + \pi_i(a|x) \cdot q_i[h, a]$\;
        }
        \Return $q_i[h]$\;
    }
}
\Hline{}
$\textsc{Traverse}(\emptyset, i, 1)$\;
\Return $v_i^{\pi, \sigma}$\;
\end{algorithm}
    \Description[ValueCompute]{ValueCompute}
\end{figure}

\begin{figure}
\makeatletter\let\@latex@error\@gobble\makeatother
\begin{algorithm}[H]
\caption{\normalfont$\textsc{ValueEstimateOS}(\pi, i, \mu, \sigma)$ for outcome sampling.}
\label{alg:outcome sampling}
\DontPrintSemicolon
$\widetilde{v}_{i}^{\pi, \sigma}(x, a) \gets 0$ for all $x\in X_i$ and $a\in A(x)$\;
\Subr{\normalfont$\textsc{TraverseOS}\left(h, i, \rho_{i}\right)$}{
    \uIf{$h\in Z$}{
      \Return $0$\;
    }
    \ElseIf{$\tau(h)=c$}{
      Sample action $a \sim \pi_c(a|h)$\;
      \Return $\textsc{TraverseOS}\left(ha, i, \rho_{i}\right) + u_{i}(h, a)$\;
    }
    Let $x$ be the information set containing $h$\;
    \uIf{$\tau(h)=i$}{
        $\pi'_{\tau(h)}(\cdot|x) \gets (1 - \epsilon) \pi_{\tau(h)}(\cdot|x) + \frac{\epsilon}{|A(x)|}$\;
    }
    \Else{
        $\pi'_{\tau(h)}(\cdot|x) \gets \pi_{\tau(h)}(\cdot|x)$\;
    }
    Sample action $a \sim \pi'_{\tau(h)}(a|x)$\;
    \uIf{$\tau(h)=i$}{
        $q_i[h] \gets 0$\;
        $q_i[h, a'] \gets 0$ for all $a'\in A(x)$\;
        \For{$a'\in A(x)$}{
            \If{$a'=a$}{
                $q_i[h, a'] \gets \textsc{TraverseOS}\left(ha', i, \pi'_{i}(a'|x)\cdot \rho_{i}\right) + u_i(h, a')$\;
            }
            $q_i[h, a'] \gets \frac{q_i[h, a']}{\pi_i'(a' | h)} + \mu d_i^{\pi, \sigma}(h, a')$\;
            $\widetilde{v}_{i}^{\pi, \sigma}(x, a') \gets \frac{q_i[h,a']}{ \rho_{i}}$\;
            $q_i[h] \gets q_i[h] + \pi_{i}(a'|x)\cdot q_i[h, a']$\;
        }
        \Return $q_i[h]$\;
    }
    \Else{
        \Return $\textsc{TraverseOS}\left(ha, i, \rho_{i}\right) + u_{i}(h,a)$\;
    }
}
\Hline{}
$\textsc{TraverseOS}(\emptyset, i, 1)$\;
\Return $\widetilde{v}_i^{\pi, \sigma}$\;
\end{algorithm}
    \Description[ValueEstimateOS]{ValueEstimateOS}
\end{figure}

\section{Exploitability under full game-tree traversals}
\label{sec:full game-tree traversal}

Figures~\ref{fig:last iterate full} and \ref{fig:average iterate full} present the exploitability of each algorithm's last- and average-iterate strategies under full game-tree traversals. 
In the last-iterate sense, PFTRL-RKL+ significantly outperforms PFTRL-KL+ in Leduc poker and shows a slight advantage in Goofspiel (4 cards) and Liars Dice (6 dice sides). However, in most other games, the differences between PFTRL-RKL+ and PFTRL-KL+ are relatively small, though PFTRL-KL+ is likely better. 
In the average-iterate sense, the perturbed FTRL algorithms are generally outperformed by the non-perturbed FTRL across most games. Nonetheless, in Leduc poker and Goofspiel (5 cards), FTRL diverges and is subsequently outperformed by the perturbed FTRL algorithms. Notably, even PFTRL-KL+ diverges in Leduc poker, allowing PFTRL-RKL+ to establish clear superiority over the other algorithms.

\begin{figure*}
    \centering
    \includegraphics[width=0.9\linewidth]{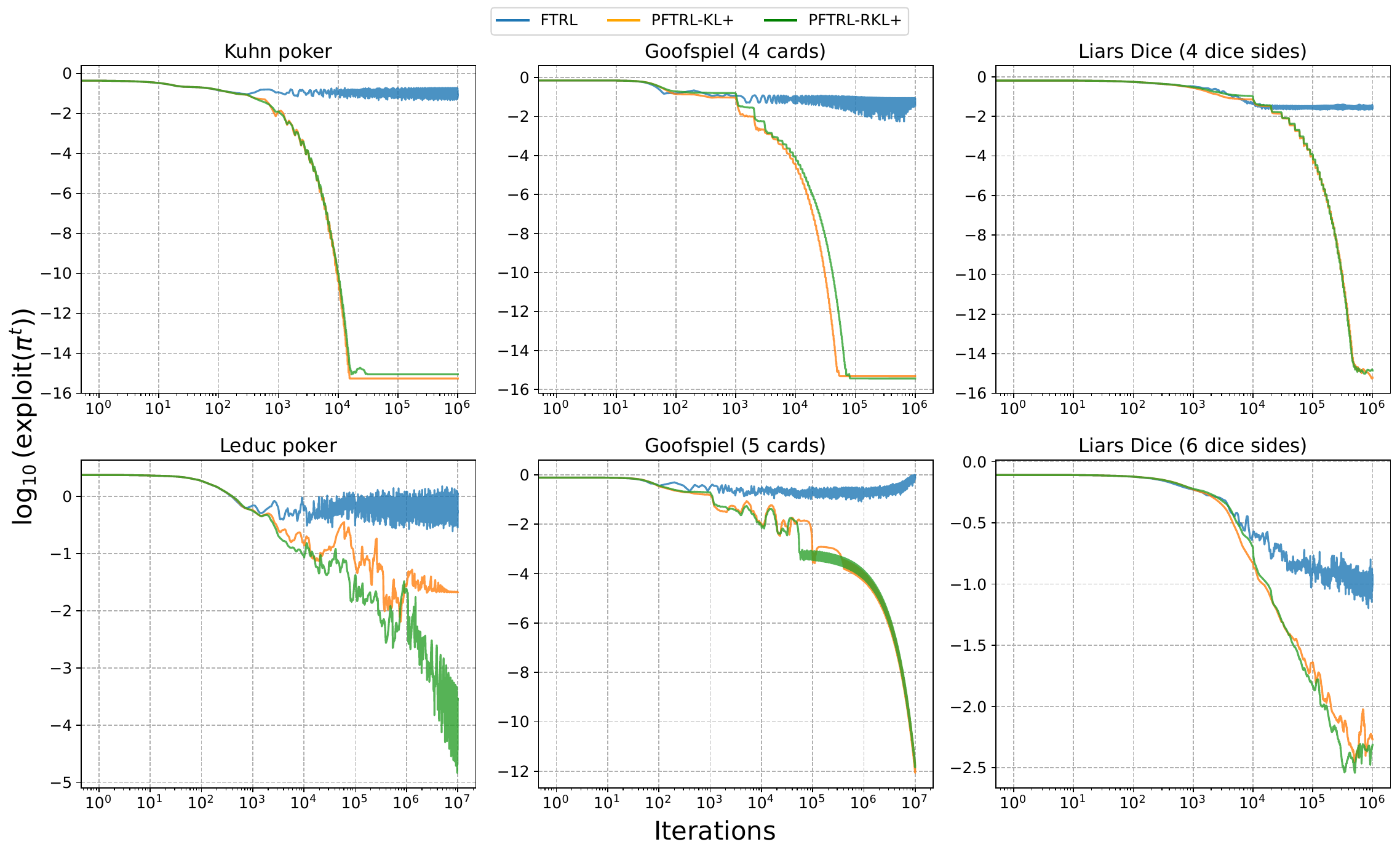}
    \caption{Exploitability of last iterate $\pi^t$ under full game-tree traversals.}
    \label{fig:last iterate full}
    \Description[Exploitability of last iterate $\pi^t$ with full game-tree traversals.]{Exploitability of last iterate $\pi^t$ with full game-tree traversals.}
\end{figure*}

\begin{figure*}
    \centering
    \includegraphics[width=0.9\linewidth]{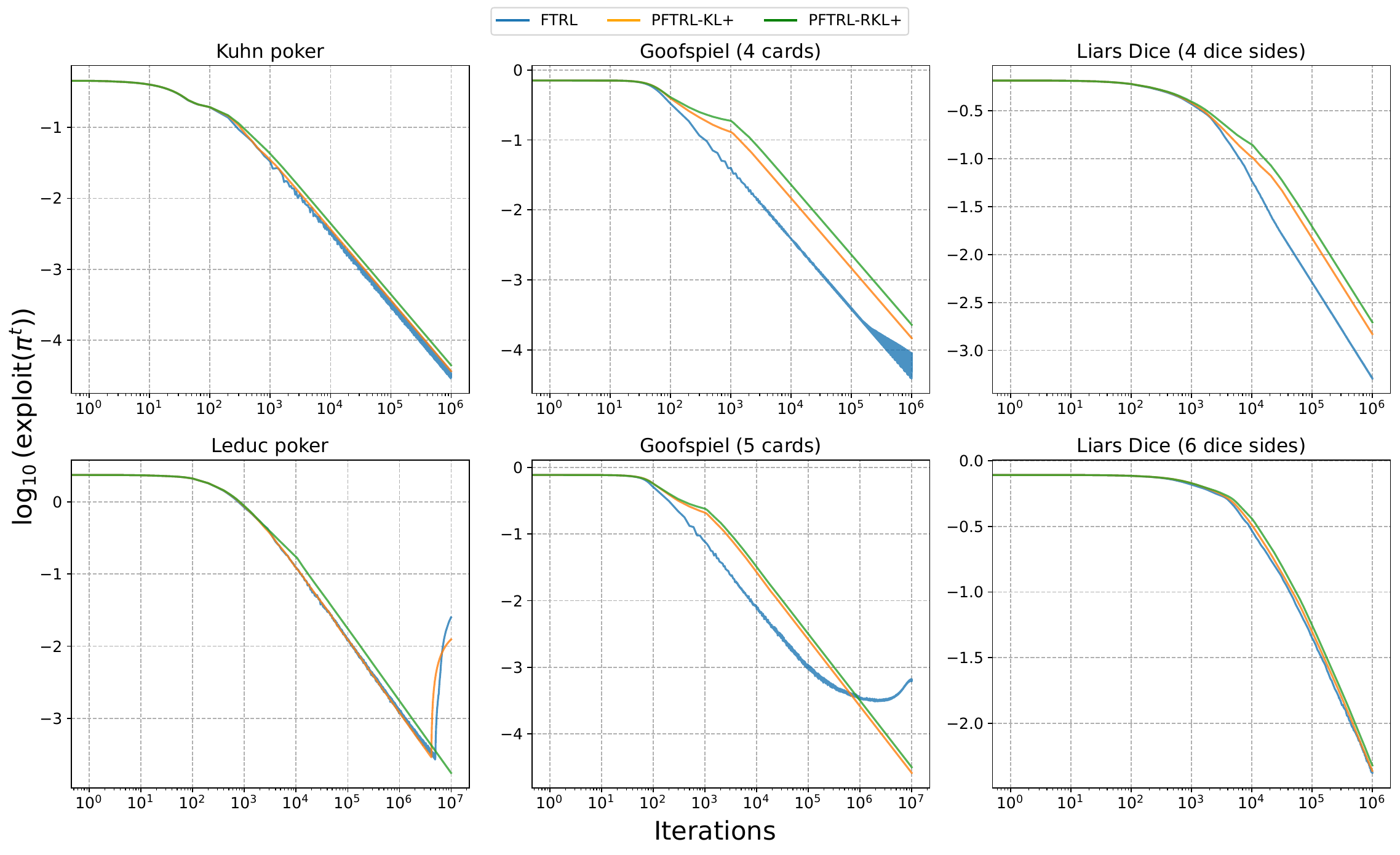}
    \caption{Exploitability of average iterate $\bar{\pi}^t$ under full game-tree traversals.}
    \label{fig:average iterate full}
    \Description[Exploitability of average iterate under a full game traverse]{Exploitability of average iterate under a full game traverse}
\end{figure*}

\section{Exploitability under external sampling}
\label{sec:external sampling}

Figure~\ref{fig:last iterate ES} depicts the exploitability of last-iterate $\pi^t$ under external sampling (ES), which samples only the opponent’s (and chance’s) choices, requiring a forward model of the game to recursively traverse all subtrees under the player’s actions. External sampling PFTRL-RKL+ (ES-PFTRL-RKL+) and -KL+ significantly outperform ES-FTRL in improving convergence and reducing exploitability. 
Notably, ES-PFTRL-RKL+ outperforms -KL+ in Leduc poker and slightly outperforms in Liars Dice (6 dice sides). 
While ES-PFTRL-KL+ edges out ES-PFTRL-RKL+ in the other games, ES-PFTRL-RKL+ consistently delivers stable results with smooth convergence behavior as well as under outcome sampling. 
Figure~\ref{fig:average iterate ES} depicts the exploitability of average-iterate $\bar{\pi}^t(x)$ under outcome sampling.
In contrast to the last-iterate case, ES-FTRL generally outperforms both ES-PFTRL-KL+ and -RKL+, achieving the lowest exploitability in most games except for Leduc poker, where ES-PFTRL-RKL+ has a noticeable advantage. 
While ES-PFTRL-RKL+ does not outperform KL+ in most games, its performance in Leduc poker highlights its potential in certain strategic settings.

\begin{figure*}
    \centering
    \includegraphics[width=0.9\linewidth]{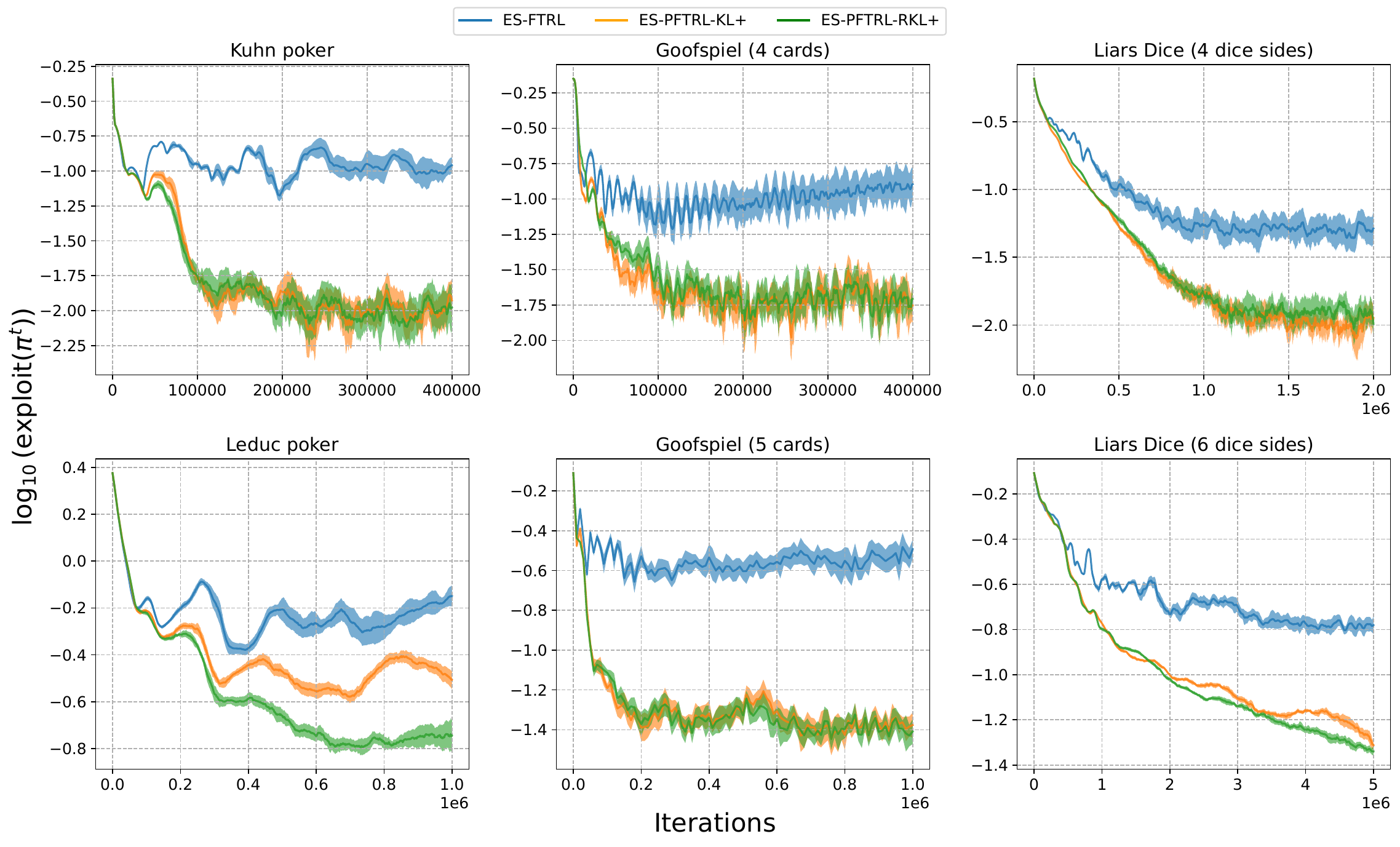}
    \caption{Exploitability of last iterate $\pi^t$ under external sampling.}
    \label{fig:last iterate ES}
    \Description[Exploitability of last iterate $\pi^t$ with external sampling.]{Exploitability of last iterate $\pi^t$ with external sampling.}
\end{figure*}

\begin{figure*}
    \centering
    \includegraphics[width=0.9\linewidth]{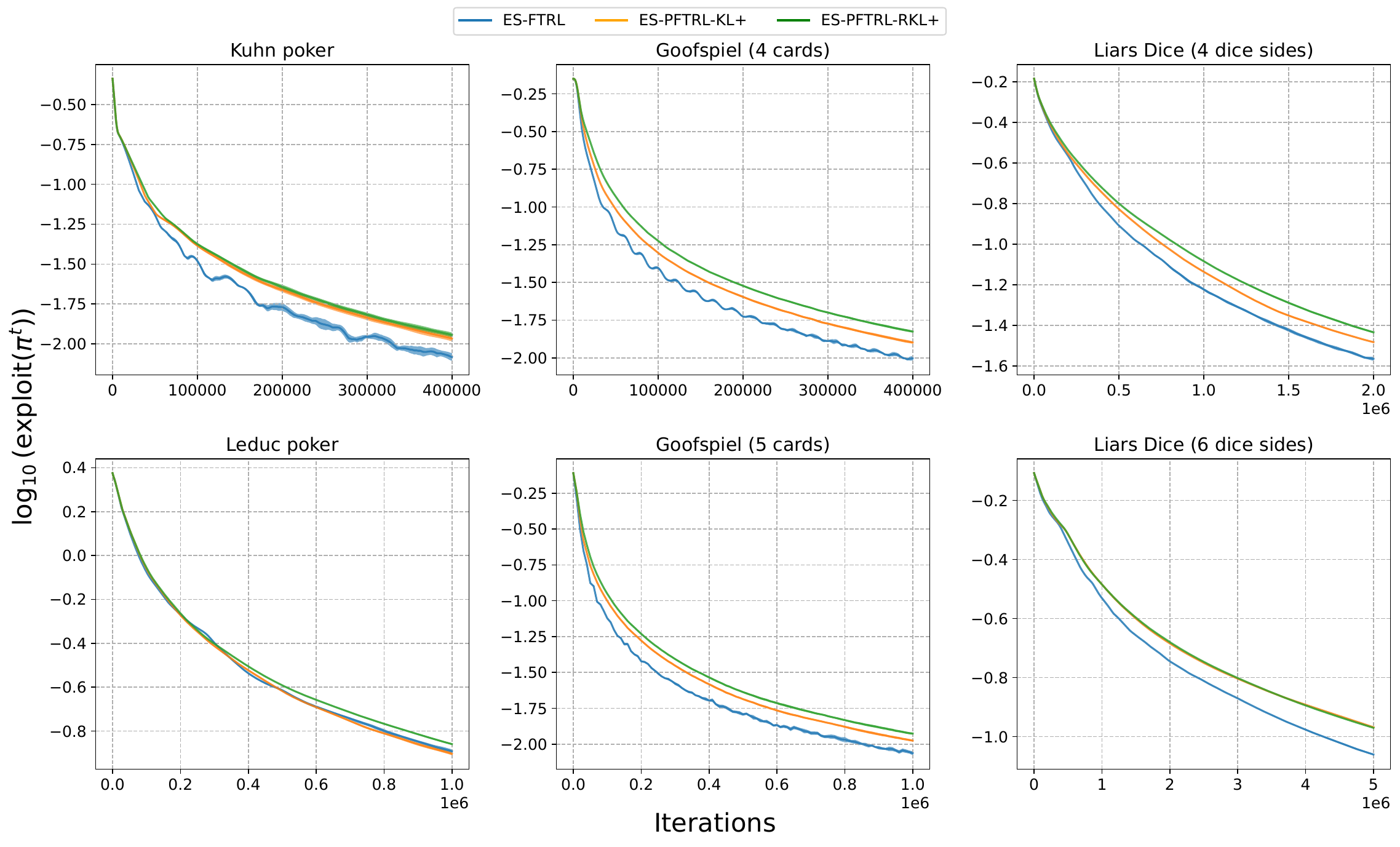}
    \caption{Exploitability of average iterate $\bar{\pi}^t$ under external sampling.}
    \label{fig:average iterate ES}
    \Description[Exploitability of average iterate under external sampling.]{Exploitability of average iterate under external sampling.}
\end{figure*}

\section{Comparison with CFR/CFR+ under outcome sampling}
\label{sec:comparison with cfr/cfr+}

We have added the results for CFR and CFR+ to Figure~\ref{fig:exploitability-last-iterate} in the main text and illustrate them in Figures~\ref{fig:last iterate OS} and \ref{fig:average iterate OS}.
The CFR-based algorithms were implemented following \cite{zinkevich2007regret,tammelin2014solving}. As shown in Figure~\ref{fig:average iterate OS}, CFR and CFR+ consistently outperform the other algorithms in the average-iterate sense, in line with previous studies. Even in Leduc poker, where the perturbed FTRL-based algorithms achieve the best performance in the last-iterate sense, CFR and CFR+ still demonstrate superior performance in the average-iterate sense.
In contrast, Figure~\ref{fig:last iterate OS} illustrates that CFR is consistently outperformed by the perturbed FTRL-based algorithms (OS-PFTRL-RKL+ and -KL+), as expected. However, CFR+ (OS-CFR+) shows significant performance across games. Notably, in Leduc poker and Liars Dice, OS-CFR+ either outperforms or matches the performance of OS-PFTRL-KL+ and OS-PFTRL-RKL+. These findings suggest an intriguing connection between CFR+ and perturbation-based approaches, potentially opening a new avenue for research into their interplay and performance dynamics.

\begin{figure*}
    \centering
    \includegraphics[width=0.9\linewidth]{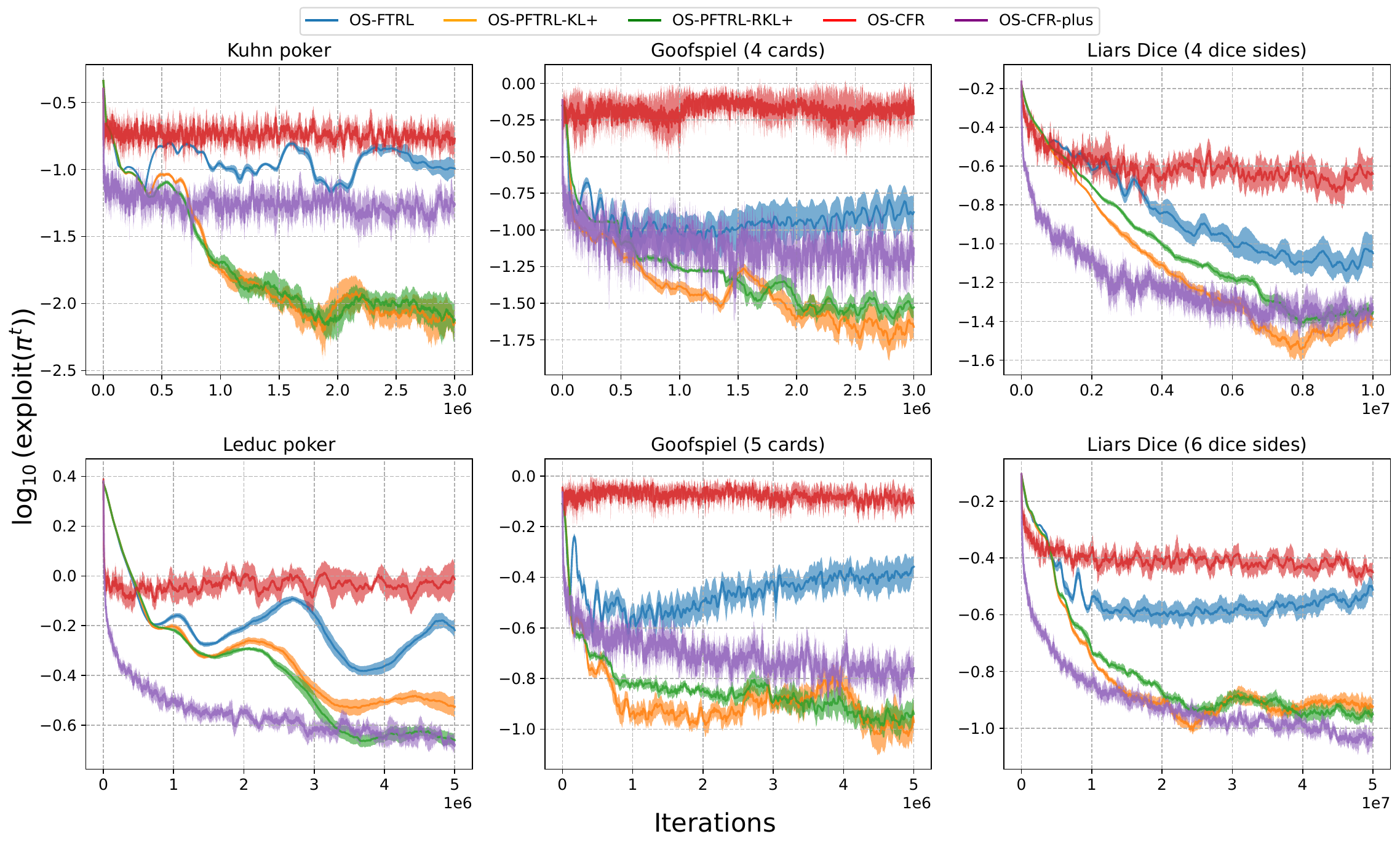}
    \caption{Exploitability of last iterate $\pi^t$ under outcome sampling.}
    \label{fig:last iterate OS}
    \Description[Exploitability of last iterate $\pi^t$ with outcome sampling.]{Exploitability of last iterate $\pi^t$ with outcome sampling.}
\end{figure*}

\begin{figure*}
    \centering
    \includegraphics[width=0.9\linewidth]{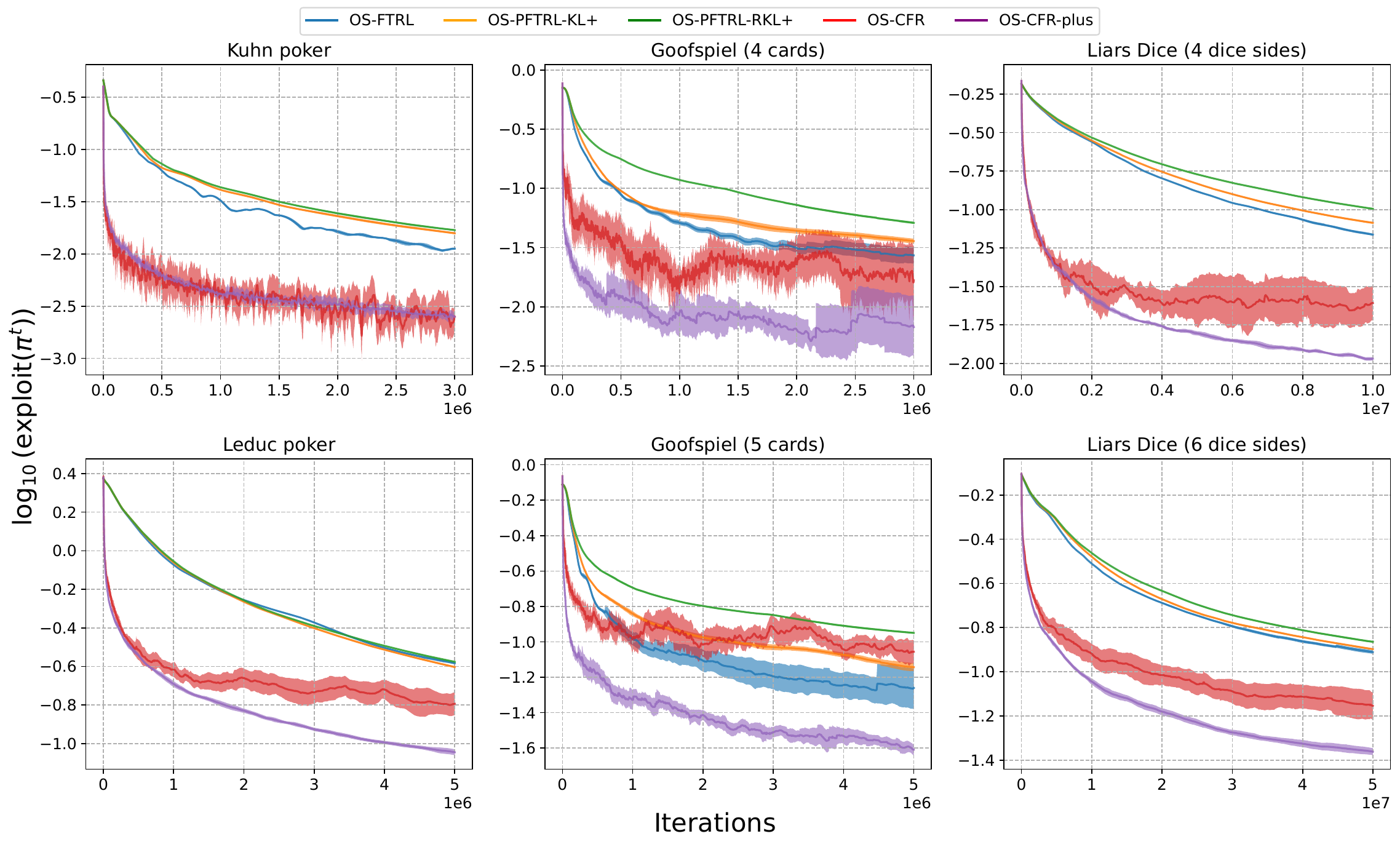}
    \caption{Exploitability of average iterate $\bar{\pi}^t$ under outcome sampling.}
    \label{fig:average iterate OS}
    \Description[Exploitability of average iterate under outcome sampling.]{Exploitability of average iterate under outcome sampling.}
\end{figure*}

\section{Comparison with Perturbed CFR+ under outcome sampling}
\label{sec:comparison with perturbed cfr+}

We have shown the results for Perturbed CFR+-L2+ (PCFR+-L2+) to Figures~\ref{fig:last iterate OS with PCFR-plus-L2+} and \ref{fig:average iterate OS with PCFR-plus-L2+}, compared with PCFR+-L2+ and the perturbed FTRL-based algorithms (OS-PFTRL-RKL+ and -KL+). 
PCFR+-L2+ is implemented following \cite{meng::2023}, employing the L2 distance to modulate the perturbation magnitude. 

We set the perturbation strength $\mu$ for perturbed CFR+ separately for each game: $\mu = 0.5$ in Kuhn poker and Goofspiel; $\mu = 0.1$ in Leduc poker; $\mu = 0.05$ in Liar's Dice with 4 dice sides; and $\mu =0.01$ in Liar's Dice with 6 dice sides.
We initialize the anchoring strategy 
uniformly: 
$\sigma_{i}(\cdot | x) = (1/|A(x)|)_{a \in A(x)}$ for each information set $x$.
It is updated every $T_{\sigma}=10,000$ visits under outcome sampling in all experiments. The sampling strategy is uniform sampling~\cite{mcaleer:iclr:2023}. The exploitability is averaged across $10$ random seeds for each algorithm and is presented on a logarithmic scale.

As shown in Figure~\ref{fig:average iterate OS with PCFR-plus-L2+}, 
the non-perturbed CFR+ consistently outperform the other algorithms, including 
PCFR+-L2+, in the average-iterate sense. That is, as we will see in FTRL, perturbation does not improve the performance. 
In contrast, in the last-iterate sense, Figure~\ref{fig:last iterate OS with PCFR-plus-L2+} illustrates that the non-perturbed CFR+ is consistently outperformed by PCFR+-L2+, FTRL-RKL+, and -KL+, as expected. 
PCFR+-L2+ shows significant performance across some of the games. 
Notably, in Leduc poker and Liars Dice, PCFR+-L2+ either outperforms or matches the performance of PFTRL-KL+ and PFTRL-RKL+. 
These findings suggest an intriguing connection between CFR+ and perturbation-based approaches, potentially opening a new avenue for research into their interplay and performance dynamics.

\begin{figure}
    \centering
    \includegraphics[width=0.9\linewidth]{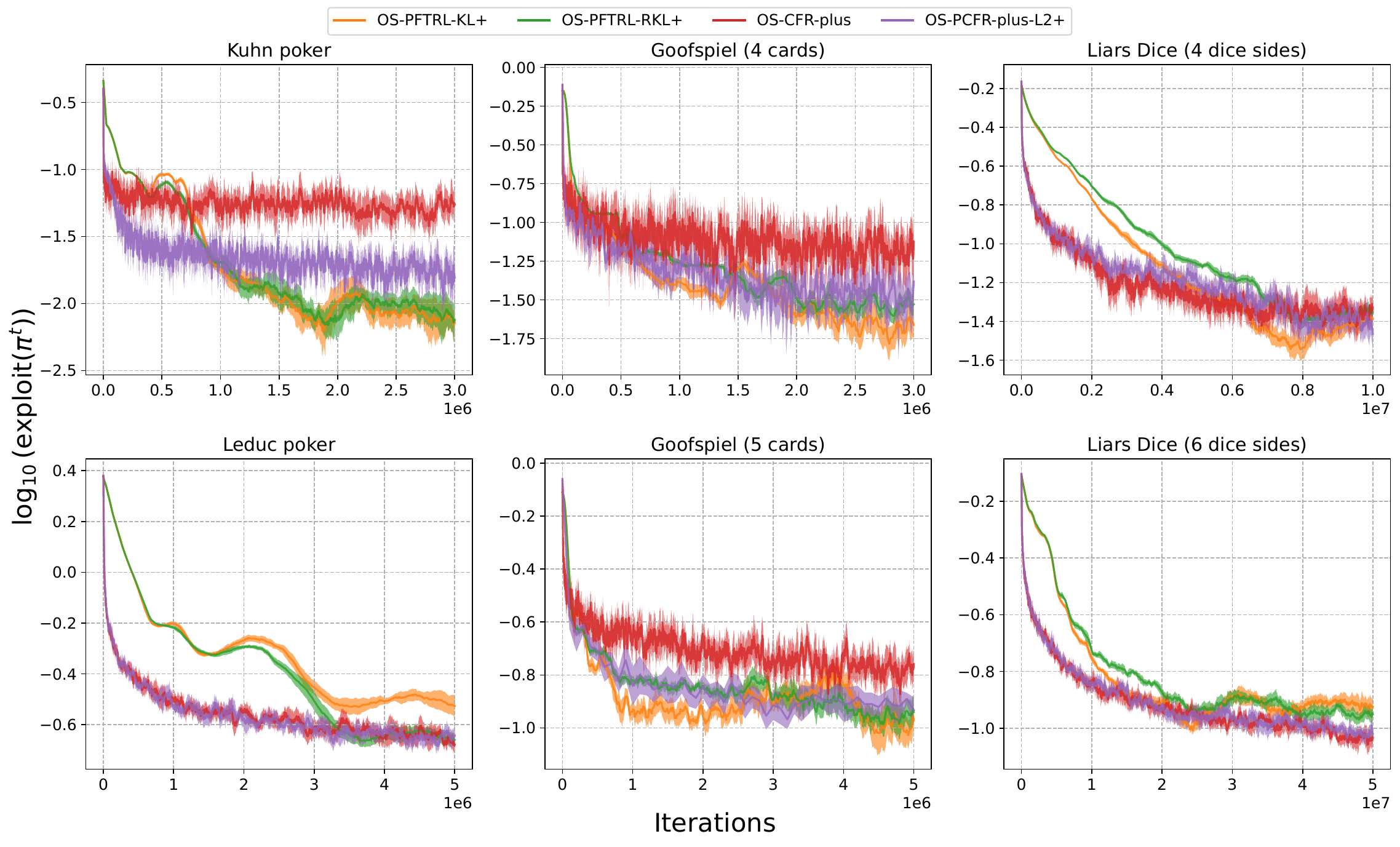}
    \caption{Exploitability of last iterate $\pi^t$ under outcome sampling with PCFR-plus-L2+.}
    \label{fig:last iterate OS with PCFR-plus-L2+}
\end{figure}
\begin{figure}
    \centering
    \includegraphics[width=0.9\linewidth]{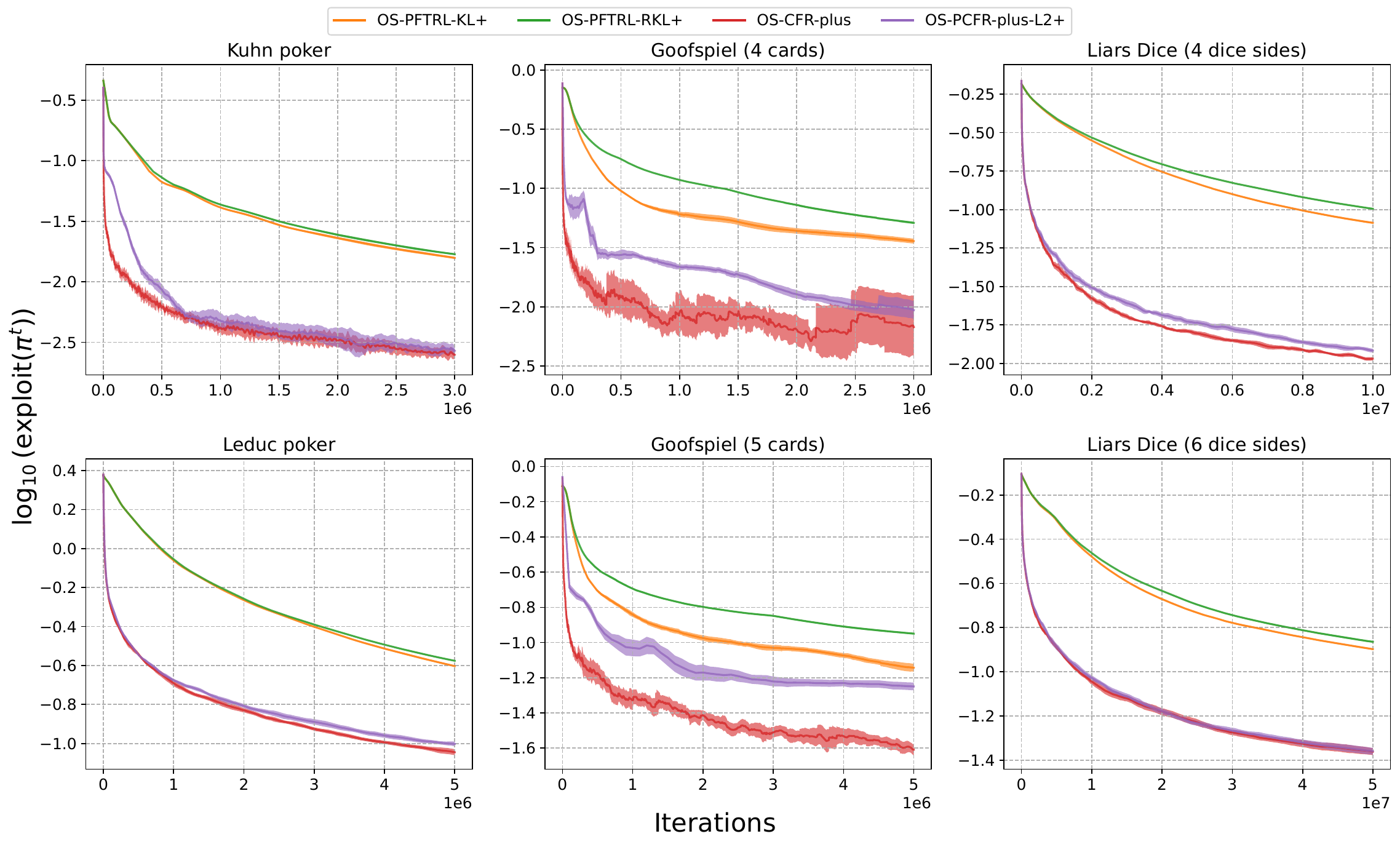}
    \caption{Exploitability of average iterate $\bar{\pi}^t$ under outcome sampling with PCFR-plus-L2+.}
    \label{fig:average iterate OS with PCFR-plus-L2+}
\end{figure}

\section{Anchoring strategy updates}
\label{sec:anchoring strategy updates}

This section discusses the anchoring strategy, which is an important component that specifies the magnitude of perturbation. Payoff perturbation 
introduces strongly convex penalties to the players' payoff functions to stabilize learning. Merely perturbing the payoffs results in the strategy converging only to an approximate Nash equilibrium. 
Fortunately, as highlighted in \cite{perolat2021poincare,abe2023last,abe:icml:2024}, 
they propose the {\it anchoring strategy update} approach, which ensures the updated strategy converges to an exact Nash equilibrium. The magnitude of perturbation is calculated as the product of a strongly convex penalty function and a perturbation strength parameter. 
Note that \citeauthor{liu2022power}~\citeyear{liu2022power}
shrink the perturbation strength at each iteration based on the current strategy profile's proximity to an underlying equilibrium so that it admits last-iterate convergence. However, it becomes challenging to choose an appropriate learning rate for the shrinking perturbation strength.

We implement their approach~\cite{perolat2021poincare,abe2023last,abe:icml:2024} into perturbed FTRL under sampling.
Specifically, every time the information set $x$ is visited $T_{\sigma} \leq T$ times, we update, or replaced the anchoring strategy $\sigma_i(\cdot | x)$ for $x$ with the current strategy $\pi_i^t(\cdot | x)$. 
Empirically, as Figure~\ref{fig:effect of anchoring strategy updates} in the Supplementary Material demonstrates, the anchoring strategy update improves PFTRL-RKL and -KL in the last- and average-iterate sense. We refer the algorithms with the anchoring strategy update as to PFTRL-RKL+ and -KL+, respectively. 
This process is illustrated in lines 10-14 of Algorithm~\ref{alg:ftrl_perturbed} in Supplementary Material. 


\begin{figure}[tb]
    \centering
     \includegraphics[width=0.6\linewidth]{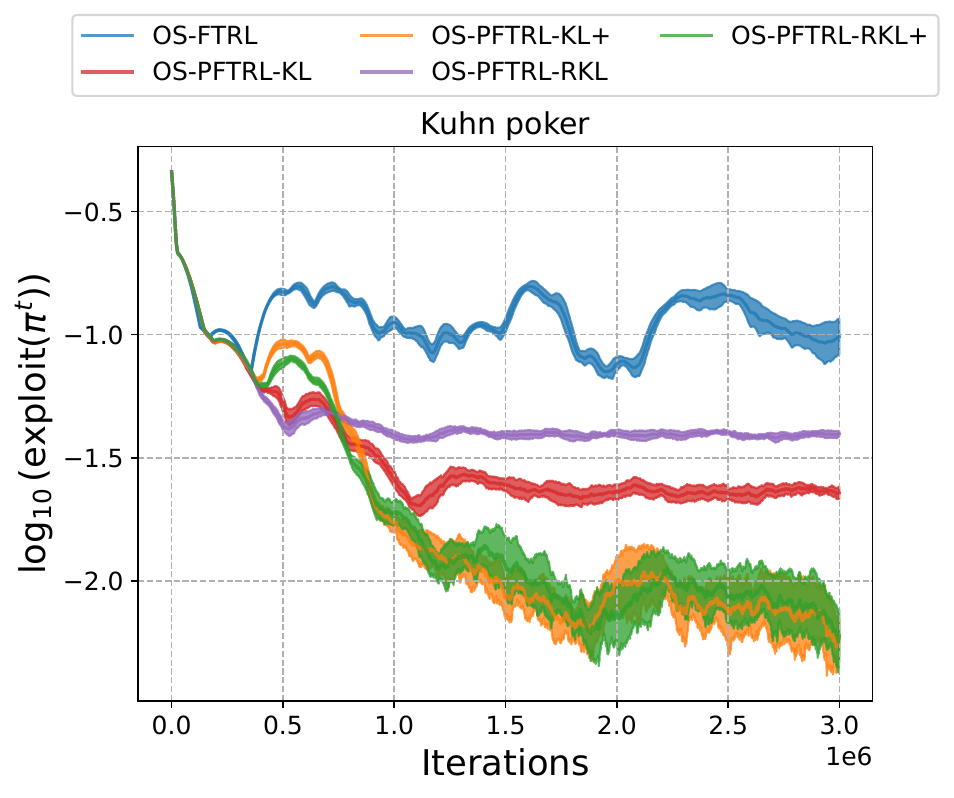}
    \caption{Effect of anchoring strategy updates of last-iterate in Kuhn poker.}
    \label{fig:effect of anchoring strategy updates}
    \Description[Effect of anchoring strategy updates]{Effect of anchoring strategy updates}
\end{figure}

Figure~\ref{fig:effect of anchoring strategy updates} illustrates the impact of anchoring strategy updates on the exploitability of the last iterate in Kuhn poker. The x-axis represents the number of iterations, while the y-axis indicates the level of exploitability. Over time, perturbations using KL or RKL divergences (OS-PFTRL-KL and -RKL) help the strategies converge faster to lower exploitability  compared to OS-FTRL. The ``+''variants using anchoring strategy updates (OS-PFTRL-KL+ and -RKL+) exhibit the best performance in the long run (lowest exploitability by the end of the iterations, dropping below~$-2.0$).

\end{document}